\def\eqref#1{equation~\ref{#1}}
\def\1{\bm{1}}
\DeclareMathAlphabet{\mathsfit}{\encodingdefault}{\sfdefault}{m}{sl}
\SetMathAlphabet{\mathsfit}{bold}{\encodingdefault}{\sfdefault}{bx}{n}
\theoremstyle{plain}
\newtheorem{theorem}{Theorem}
\newtheorem{corollary}{Corollary}
\theoremstyle{definition}
\newtheorem{definition}{Definition}
\theoremstyle{remark}
\newtheorem{remark}{Remark}
\title{Rigid Invariant Sliced Wasserstein   \newline via
Independent Embeddings}
\author{
Zakk Heile$^*$ \quad
Peilin He$^*$ \quad
Jayson Tran$^*$ \quad
Alice Wang$^*$ \quad
Shrikant Chand \\
$^{1}$Division of Natural and Applied Sciences, Duke Kunshan University \\
$^{2}$Department of Computer Science, Duke University \\
$^{3}$Department of Mathematics, Duke University \\
\texttt{zakk.heile@duke.edu} \\
\texttt{peilin.he@dukekunshan.edu.cn} \\
\texttt{jayson.tran@duke.edu} \\
\texttt{alice.wang@duke.edu} \\
\texttt{shrikant.chand@duke.edu} \\
$^*$Equal contribution
}
\begin{document}

\maketitle
\begin{abstract}
Comparing probability measures modulo unknown rigid transformations is a central challenge in geometric data analysis. Classical optimal transport (OT) distances, including Wasserstein and sliced Wasserstein, are sensitive to rotations and reflections, whereas Gromov-Wasserstein (GW) and Procrustes-Wasserstein (PW) distances are invariant to isometries but computationally prohibitive for large datasets. We introduce \emph{Rigid-Invariant Sliced Wasserstein via Independent Embeddings} (RISWIE), a scalable distance that combines the invariance of NP-hard approaches with the efficiency of projection-based OT. RISWIE utilizes data-adaptive bases and matches optimal signed permutations along axes according to distributional similarity to achieve rigid invariance with nearly linear complexity in the sample size. We prove bounds relating RISWIE to GW in special cases and demonstrate dimension-independent statistical stability. Our experiments on cellular imaging and 3D human meshes demonstrate that RISWIE outperforms GW and PW in clustering tasks and discriminative capability while significantly reducing runtime.
\end{abstract}

\section{Introduction}\label{section:intro}


Optimal transport (OT) distances, such as the Wasserstein distance, have become central tools in data analysis due to their ability to compare probability measures in a geometrically meaningful way \citep{Peyre2019CompOT, santambrogio2015ot}. However, in settings where shapes may be oriented arbitrarily, the usefulness of these distances is limited, as they are sensitive to rigid transformations \citep{Besl1992}. Although rigid transformations preserve pairwise distances, rigid point set registration with unknown correspondences is computationally intractable: the problem requires jointly optimizing over rotations and permutations, yielding an NP-hard combinatorial search over point correspondences \citep{cela2013quadratic, ling2024generalizedorthogonalprocrustesproblem}.

Several approaches address rigid misalignment by explicitly aligning point sets during comparison. Procrustes–Wasserstein methods jointly optimize over rigid transformations and transport plans, equivalently minimizing the Wasserstein distance over all relative orientations between two shapes \citep{adamo2025depthlookprocrusteswassersteindistance}. In the absence of known correspondences, however, this formulation is NP-hard \citep{ling2024generalizedorthogonalprocrustesproblem}. More broadly, prior approaches to isometry-invariant shape matching optimize continuous relaxations of Gromov–Hausdorff–type distortions or rotation-invariant optimal transport objectives; these formulations are nonconvex and are solved in practice via iterative alternating optimization \citep{doi:10.1137/16M1068827, Bronstein2006}.

Other approaches avoid explicit registration altogether by enforcing invariance to rigid motions by construction. The Gromov–Wasserstein (GW) distance \citep{Memoli2011GW} compares metric measure spaces via their internal relational structure rather than ambient coordinates. While theoretically appealing, GW distances require solving high-order, nonconvex optimization problems and are NP-hard in general. Recent work has proposed entropic regularization and algorithmic accelerations to improve practical performance \citep{rioux2024entropicgromovwassersteindistancesstability, zhang2024fastgradientcomputationgromovwasserstein, li2023efficientapproximationgromovwassersteindistance}, but these methods remain computationally demanding at scale and typically rely on approximate solvers. Consequently, the resulting distances are not guaranteed to preserve the ordering induced by the exact objective, potentially leading to inconsistent rankings between distributions.

Interpretable machine learning methods have been developed across a wide range of modalities \citep{rudin2021interpretablemachinelearningfundamental, heile2025context}. Our distance is also interpretable in a structural sense: it decomposes comparison into explicit matches between embedded components across distributions. In this respect, it is related in spirit to prototype-based models in vision, such as ProtoPNet \citep{chen2019lookslikethatdeep, donnelly2024deformableprotopnetinterpretableimage}, which make predictions by scoring similarity to learned components. 

Motivated by this trade-off between invariance and computational tractability, we introduce a new distance that achieves rigid invariance while remaining computable in polynomial time and scalable to large datasets. 

\paragraph{Contributions.} Our main contributions are:
\begin{itemize}
    \item[(i)] We introduce \textbf{RISWIE}, a sliced transport distance that combines data-dependent embeddings with optimal signed-permutation alignment to compare measures up to rigid transformations at nearly linear cost in the size of the empirical measures.
    \item[(ii)] We establish theoretical guarantees, including rigid invariance, the pseudometric property, closed-form expressions for Gaussian measures, and explicit bounds relating RISWIE to Gromov--Wasserstein.
    \item[(iii)] We demonstrate empirical, dimension-independent finite-sample convergence for bias and variance.
    \item[(iv)] We show that RISWIE achieves state-of-the-art runtime with essentially no accuracy tradeoffs in shape partitioning, clustering, and alignment benchmarks.
\end{itemize}

\section{Preliminaries}\label{section:background}
We use $\|\cdot\|$ to denote the $\ell_2$ norm on $\mathbb{R}^d$, $\mathcal{P}(\mathbb{R}^d)$ the set of Borel probability measures on $\mathbb{R}^d$, and $\mathcal{P}_2(\mathbb{R}^d)$ the subset with finite second moments.
Given $\mu,\nu \in \mathcal{P}_2(\mathbb{R}^d)$, the 2-Wasserstein distance is
\begin{equation}
    W_2^2(\mu, \nu) = \inf_{\pi \in \Pi(\mu, \nu)} \int_{\mathbb{R}^d \times \mathbb{R}^d} \|x - y\|^2 \, d\pi(x, y),
    \label{Wp}
\end{equation}
where $\Pi(\mu, \nu)$ is the set of couplings with marginals $\mu,\nu$~\citep{villani2008optimal,santambrogio2015ot}. 
In practice, the above measures are approximated by the empirical sample-based measures 
\[
\mu_s = \tfrac{1}{s} \sum_{i=1}^s \delta_{x_i}, \quad 
\nu_t = \tfrac{1}{t} \sum_{j=1}^t \delta_{y_j},
\]
which can be shown to converge weakly as $s,t \to \infty$ by a theorem of Varadarajan~\citep{Varadarajan1958}.  
For $n$ samples, the computation of this distance scales as $O(n^3 \log n)$ , and entropic regularization reduces this to $O(n^2)$ per iteration using Sinkhorn updates~\citep{Peyre2019CompOT}. Despite these improvements, Wasserstein remains expensive in high dimensions and sensitive to rigid transformations. 

An extension of this distance is the Procrustes--Wasserstein distance, which seeks a rigid transformation of the ambient space that best aligns two measures jointly with transport. Formally, it is defined as
\begin{equation}
\mathrm{PW}_2^2(\mu,\nu)
\;=\;
\inf_{R \in O(d)} \;
\inf_{\pi \in \Pi(\mu,\nu)}
\int_{\mathbb{R}^d \times \mathbb{R}^d}
\|x - R y\|^2 \, d\pi(x,y).
\label{eq:pw}
\end{equation}

where $O(d)$ is the set of all rigid transformations. While this formulation is desirable, it is NP-hard \citep{ling2024generalizedorthogonalprocrustesproblem}.

Taking a different approach to achieve rigid-invariance, the Gromov--Wasserstein (GW) distance compares measures on metric spaces $(X,d_X)$ and $(Y,d_Y)$ without requiring a shared ambient space, instead aligning their internal distance structures~\citep{Memoli2011GW}:
\[
\mathrm{GW}_2^2(\mu,\nu) = \inf_{\pi \in \Pi(\mu, \nu)} \iint \big|d_X(x,x')-d_Y(y,y')\big|^2 \, d\pi(x,y)\,d\pi(x',y').
\]
While GW is invariant to rigid transformations, it also requires solving an NP-hard quadratic assignment problem~\citep{cela2013quadratic, kravtsova2025nphardnessgromovwassersteindistance}. Even approximate solvers scale as $O(n^3)$ per iteration, making GW computations scale poorly with sample size \citep{kerdoncuff2021, rioux2024entropicgromovwassersteindistancesstability, li2023efficientapproximationgromovwassersteindistance}. 

While Procrustes-Wasserstein and Gromov-Wasserstein are the two natural rigid-invariant distances, our distance relies on nice properties of Wasserstein in 1D.

In one dimension, if $F_\mu$ and $F_\nu$ denote the cumulative distribution
functions (CDFs) of $\mu$ and $\nu$, the $W_2$ distance admits the closed form
\[
W_2^2(\mu,\nu) = \int_0^1\!\big(F_\mu^{-1}(t)-F_\nu^{-1}(t)\big)^2 dt,
\]
which can be evaluated in $O(n \log n)$ \citep{villani2008optimal}. The sliced Wasserstein (SW) distance extends this to higher dimensions by
projecting onto directions $\theta \in S^{d-1}$ and averaging:
\[
\mathrm{SW}_2^2(\mu,\nu)
=
\int_{S^{d-1}}
W_2^2\!\big(P_\theta\#\mu, P_\theta\#\nu\big)\,d\theta,
\]
where $P_\theta(x)=\langle x,\theta\rangle$ and $P_\theta\#\mu$ denotes the
pushforward of $\mu$ under $P_\theta$~\citep{Rabin2012SW,Kolouri2019GSW}.
\section{Methodology}

We now define a new distance, which we denote as the Rigid-Invariant Sliced Wasserstein via Independent Embeddings (RISWIE) distance, which we will show to minimize the trade-off between computability and desirability. The construction has three components: (i) data-dependent embeddings that map each distribution into a low-dimensional coordinate system derived from its own geometry, (ii) an alignment step that pairs axes across embeddings using signed permutations, and (iii) an aggregation of one-dimensional Wasserstein costs over the matched axes. 

\subsection{Problem Formulation}\label{section:problem}
Let $\mu, \nu \in \mathcal{P}_2(\mathbb{R}^d)$ be probability measures. 
We first define the object over which we optimize to define our distance, 
which is related to the concept of a rigid transformation.
\begin{definition}[Signed Permutation Group]
The \emph{signed permutation group} on $k$ elements is
\[
\mathcal{O}_k^\pm := \{R \in \mathbb{R}^{k\times k} : R^\top R = I_k,\ R_{ij} \in \{0,\pm1\},\ \text{one nonzero per row/column}\}.
 \quad (|\mathcal{O}_k^{\pm}| = 2^{k}\,k!)
\]
Equivalently,
\[
\mathcal{O}_k^\pm
=
\{ D_\varepsilon P_\pi : \pi \in S_k,\ D_\varepsilon = \mathrm{diag}(\varepsilon_1,\dots,\varepsilon_k),\ \varepsilon_j \in \{\pm 1\} \},
\]
where $P_\pi$ is the permutation matrix associated with $\pi$, i.e.,
$(P_\pi)_{ij} = 1$ if $i = \pi(j)$ and $0$ otherwise.
\end{definition}

The RISWIE distance defined below can be seen as the minimum cost axis and relative sign pairing across all $2^k k!$ pairings, where the cost is defined as the Wasserstein distance between the distributions embedded on those axes. We will denote it as $D$ throughout the paper.

\begin{definition}[RISWIE Distance]
Let $\mu, \nu$ be centered probability measures on $\mathbb{R}^{d_1}$ and $\mathbb{R}^{d_2}$, respectively. Let $\phi:=(\phi_1, \dots, \phi_k) : \mathbb{R}^{d_1} \to \mathbb{R}^k$ and $\psi:=(\psi_1, \dots, \psi_k) : \mathbb{R}^{d_2} \to \mathbb{R}^k$ be fixed embedding functions. Let $\mathcal{O}_k^\pm$ denote the group of signed permutation matrices of size $k \times k$. For $R \in \mathcal{O}_k^\pm$, define $(R \psi)_j := \varepsilon_j \psi_{\pi(j)}$, where $R$ corresponds to a signed permutation $(\pi, \varepsilon)$.

The Rigid-Invariant Sliced Wasserstein via Independent Embeddings (RISWIE) distance is defined as
\[
D^2(\mu, \nu) := \min_{R \in \mathcal{O}_k^\pm} \frac{1}{k} \sum_{j=1}^k W_2^2\left( (\phi_j)_\# \mu,\; ((R \psi)_j)_\# \nu \right),
\]
where $W_2$ denotes the 2-Wasserstein distance on $\mathbb{R}$ and $(\phi_j)_\# \mu$ is the pushforward of $\mu$ under $\phi_j$.
\end{definition}

 This definition only requires considering the relative sign difference between any two axes that are compared because $W_2$ is invariant under simultaneous reflection in one dimension. Thus, the minimization is equivalent to evaluating all possible axis pairings together with all possible sign assignments for each pairing. We additionally assume the distributions are centered with mean 0.

The embeddings $\phi_j$ and $\psi_j$ are user-friendly and may be obtained via linear (e.g. PCA) or nonlinear (e.g. diffusion maps) dimensionality reduction techniques \citep{Coifman2006Diffusion}, or other data-dependent procedures. This formulation avoids requiring a common projection basis, since alignment is performed directly between the one-dimensional pushforwards of $\mu$ and $\nu$.

The group $\mathcal{O}_k^\pm$ captures the necessary permutations and sign changes of embedding coordinates, corresponding to orthogonal transformations that preserve the independence of axes. Furthermore, minimization over $\mathcal{O}_k^\pm$ is a finite assignment problem solvable in $O(k^3)$ via the Hungarian algorithm, assuming pairwise costs have already been computed \citep{Munkres1957}.

\begin{algorithm}[H]
\caption{RISWIE Empirical Computation}
\label{alg:riswie}
\KwIn{Empirical measures $X=\{x_1,\dots,x_{n_1}\}\subset\mathbb{R}^{d_1}$, $Y=\{y_1,\dots,y_{n_2}\}\subset\mathbb{R}^{d_2}$; embeddings $\Phi=(\phi_1,\dots,\phi_k)$, $\Psi=(\psi_1,\dots,\psi_k)$.}
\KwOut{$D(X,Y)=D(X,Y)$.}

\BlankLine
$X \gets \{x_i - \mathrm{mean}(X)\}_{i=1}^{n_1};\quad Y \gets \{y_i - \mathrm{mean}(Y)\}_{i=1}^{n_2}
$
\BlankLine
\For{$\ell=1,\dots,k$}{
  $A_\ell \gets \big(\phi_\ell(x_1),\dots,\phi_\ell(x_{n_1})\big)$ \tcp*{embed $X$ onto axis $\ell$}
  $B_\ell \gets \big(\psi_\ell(y_1),\dots,\psi_\ell(y_{n_2})\big)$ \tcp*{embed $Y$ onto axis $\ell$}
  $\widetilde{A}_\ell \gets \mathrm{sort}(A_\ell)$;\quad $\widetilde{B}_\ell \gets \mathrm{sort}(B_\ell)$ \tcp*{sort in ascending order before}
}

\BlankLine
\For{$\ell=1,\dots,k$}{
  \For{$m=1,\dots,k$}{
    $c_{\ell m}^{+} \gets \mathsf{W2sorted}^2\!\big(\widetilde{A}_\ell,\, \widetilde{B}_m\big)$\;
    $c_{\ell m}^{-} \gets \mathsf{W2sorted}^2\!\big(\widetilde{A}_\ell,\, \mathrm{reverse}(-\,\widetilde{B}_m)\big)$ \tcp*{reflect and reverse}
    $C_{\ell m} \gets \min\{c_{\ell m}^{+},\,c_{\ell m}^{-}\}$ \tcp*{best sign for pair $(\ell,m)$}
  }
}

\BlankLine
$\pi^\star \gets \arg\min_{\pi\in S_k}\ \sum_{\ell=1}^k C_{\ell,\pi(\ell)}$ \tcp*{solved by Hungarian}
$Z \gets \sum_{\ell=1}^k C_{\ell,\pi^\star(\ell)}$\;

\Return $D(X,Y) \gets \sqrt{\,Z/k\,}$\;

\BlankLine
\footnotesize{\textbf{Note:} $\mathsf{W2sorted}^2$ assumes its two input vectors are already sorted (ascending).
For equal weights, it returns $\frac{1}{N}\sum_{i=1}^N (u_i-v_i)^2$ when the two lists are length-$N$; for unequal lengths/weights, it runs the standard two-pointer monotone coupling in $O(n_1+n_2)$ time.
Pre-sorting each projected list once (above) avoids re-sorting inside every 1D OT call, saving a factor of $k$.
Negating reflects the distribution across 0; reversing ensures the reflected list remains sorted in ascending order.}
\end{algorithm}

To analyze time complexity, we take $d:= \max\{d_1, d_2\}$ and $n := \max\{n_1, n_2\}$. We also assume that $k \leq d$ and $n \geq d$, as is common in practice.

\paragraph{For PCA embeddings,}
$$
O\big(
\underbrace{n d^{2}}_{\text{covariances}} +
\underbrace{k d^{2}}_{\text{top-$k$ eigens}} +
\underbrace{k n d}_{\text{projection}} +
\underbrace{k n \log n}_{\text{sort once}} +
\underbrace{k^{2} n}_{\text{$k^2$ sorted $W_2^2$ calls}} +
\underbrace{k^{3}}_{\text{Hungarian}}
\big)
= O\big(nd^2 \;+\; dn\log n\big).
$$

\paragraph{For Diffusion Map embeddings,}

$$
O\big(
\underbrace{n^{2} d}_{\text{kernel build}} +
\underbrace{k n^{2}}_{\text{top-$k$ eigens}} +
\underbrace{k n \log n}_{\text{sort once}} +
\underbrace{k^{2} n}_{\text{$k^2$ sorted $W_2^2$ calls}} +
\underbrace{k^{3}}_{\text{Hungarian}}
\big) = O\big(n^2d\big).
$$

Both of the above embedding choices are computationally efficient when used with the proposed scheme, with PCA-RISWIE being nearly linear in the number of samples. With $n \geq d$, these approaches are faster than standard Optimal Transport, Gromov-Wasserstein, and equivalent asymptotically to Sliced Wasserstein with $d$ projection axes. However, because random sampling can perform poorly in higher dimensions, one might instead choose a superlinear number of axes (such as $d \log d$), in which case RISWIE becomes asymptotically faster.


\subsection{Theoretical Properties}\label{section:theory}
We verify that RISWIE meets the criteria specified in the preceding sections. The first result establishes rigid invariance under mild conditions on the embedding procedure. We then show that RISWIE is a pseudometric on $\mathcal{P}_2(\mathbb{R}^d)$. Additionally, we give a closed-form expression for Gaussian measures with PCA embeddings, compare it to the Gromov--Wasserstein distance with explicit bounds.

\begin{theorem}[Rigid-Invariance]\label{thm:rigid_invariance}
Let $\mu, \nu \in \mathcal{P}_2(\mathbb{R}^d)$, and $T(x) = Rx + t$ an affine transformation for $R\in O(d)$, $t \in \mathbb{R}^d$. Suppose either:
\begin{enumerate}[label=(\roman*)]
    \item (PCA) All nonzero eigenvalues of the centered covariance of $\mu$ are unique (so $\mu$ has finite second moments); or
    \item (Diffusion map) The embedding returns the same set of eigenvectors (up to sign) for a given matrix (i.e., deterministic eigensolver for fixed input).
\end{enumerate}
Then
\[
D(\mu, \nu) = D(T_\#\mu, \nu).
\]
In particular, $D(\mu, T_\#\mu) = 0$.
\end{theorem}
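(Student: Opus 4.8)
The plan is to lean on two structural features of RISWIE: it only ever sees \emph{centered} measures, and its value depends on an embedding solely through the multiset of one-dimensional pushforwards that embedding produces, up to independent reflections of the axes. First I would fix the reading of the statement: since the embeddings are data-adaptive, $D(T_\#\mu,\nu)$ means ``center $T_\#\mu$, compute its PCA (resp.\ diffusion-map) embedding $\phi'$; center $\nu$, keep its embedding $\psi$; then evaluate the defining minimum.'' Because $T(x)=Rx+t$ is affine, centering annihilates the translation: the centering of $T_\#\mu$ is $R_\#\widetilde\mu$, where $\widetilde\mu$ is the centering of $\mu$. So it suffices to prove $D(R_\#\widetilde\mu,\nu)=D(\widetilde\mu,\nu)$ for $R\in O(d)$, with both measures already centered.

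Next I would record the algebraic core: a symmetry of the defining minimum. Write $a_j=(\phi_j)_\#\widetilde\mu$ and $b_j=(\psi_j)_\#\widetilde\nu$, viewed as elements of $\mathcal{P}(\mathbb{R})$, and let $\mathcal{O}_k^\pm$ act on $k$-tuples of one-dimensional measures by permuting entries and reflecting individual entries across the origin, so that $D^2(\widetilde\mu,\widetilde\nu)=\tfrac1k\min_{R\in\mathcal{O}_k^\pm}\sum_j W_2^2(a_j,(R\cdot b)_j)$. Since $x\mapsto-x$ is an isometry of $\mathbb{R}$, the one-dimensional $W_2^2$ is unchanged when both arguments are reflected; hence $\sum_j W_2^2((Q\cdot a)_j,(Q\cdot b)_j)=\sum_j W_2^2(a_j,b_j)$ for every $Q\in\mathcal{O}_k^\pm$. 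Writing $R\cdot b=Q\cdot(Q^{-1}R\cdot b)$ and noting that $Q^{-1}R$ ranges over $\mathcal{O}_k^\pm$ as $R$ does, I get that the minimum is invariant under $a\mapsto Q\cdot a$ (and, symmetrically, under $b\mapsto Q\cdot b$). So $D^2$ depends on $\phi$ only through $(a_1,\dots,a_k)$ up to an element of $\mathcal{O}_k^\pm$.

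Then I would check that under each hypothesis the embedding of $R_\#\widetilde\mu$ returns a tuple differing from $(a_1,\dots,a_k)$ by an element of $\mathcal{O}_k^\pm$ — in fact by a pure sign flip. Under (i), the covariance of $R_\#\widetilde\mu$ is $R\Sigma R^\top$ with $\Sigma=\mathrm{Cov}(\widetilde\mu)$; it has the same nonzero eigenvalues as $\Sigma$ (simple, by hypothesis) with eigenvectors $Rv_i$, so simplicity forces each returned unit top eigenvector to be $\pm Rv_i$, whence $\phi'_i(Rx)=\langle\pm Rv_i,Rx\rangle=\pm\langle v_i,x\rangle=\pm\phi_i(x)$ and $(\phi'_i)_\#(R_\#\widetilde\mu)=\pm a_i$; for indices beyond $\mathrm{rank}\,\Sigma$ the pushforward is $\delta_0$ on both sides, so equality holds there automatically. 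Under (ii), the kernel is a function of pairwise Euclidean distances, which $R$ (indeed $T$) preserves, so the kernel matrix (empirical setting; or kernel operator, in general) attached to $R_\#\widetilde\mu$ is identical to that attached to $\widetilde\mu$, and the determinism hypothesis then gives the same eigenvectors up to sign, again yielding $(\phi'_i)_\#(R_\#\widetilde\mu)=\pm a_i$. Combining with the symmetry lemma, $D(T_\#\mu,\nu)=D(\mu,\nu)$. The final assertion follows by setting $\nu=T_\#\mu$: then $D(\mu,T_\#\mu)=D(T_\#\mu,T_\#\mu)$, and the right side is $0$ since both arguments carry the same embedding, so $R=I$ is feasible with cost $0$.

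I expect the only genuinely delicate points — the ones worth writing out rather than asserting — to be the bookkeeping in the symmetry lemma (that relabelings and sign flips of the $a_j$ are absorbed into the minimization over $\mathcal{O}_k^\pm$ applied to $b$), the degenerate case $k>\mathrm{rank}\,\Sigma$ in (i) where the relevant eigenvectors are non-unique but all produce the pushforward $\delta_0$, and making precise what ``diffusion-map embedding of a measure'' means outside the empirical setting — the natural reading being the eigenfunctions of the distance-kernel operator on $L^2(\mu)$, which transform by $g\mapsto g\circ R^{-1}$ under $\mu\mapsto R_\#\mu$, leaving the coordinate pushforwards unchanged up to sign. Everything else is routine.
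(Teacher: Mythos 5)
Your proposal is correct and follows essentially the same route as the paper's proof: centering removes the translation, the covariance (resp.\ the distance-based diffusion kernel) transforms so that the embedding axes of the transformed measure are the original axes rotated, unique up to sign for simple nonzero eigenvalues and yielding Dirac masses at $0$ on null directions, and the residual sign/permutation ambiguity is absorbed by the minimization over $\mathcal{O}_k^\pm$. The one difference worth noting is organizational and in your favor: your explicit symmetry lemma (that acting on the tuple of one-dimensional pushforwards of one argument by $Q\in\mathcal{O}_k^\pm$ leaves $D$ unchanged, since $Q^{-1}R$ ranges over the group and $W_2$ is invariant under simultaneous reflection) gives a rigorous proof of the general identity $D(\mu,\nu)=D(T_\#\mu,\nu)$, a step the paper proves in detail only for the special case $D(\mu,T_\#\mu)=0$ and otherwise treats as a ``simple generalization.''
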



Like the empirical Sliced Wasserstein distance, Theorem \ref{them:pseudometric} shows that RISWIE is a pseudometric.

\begin{theorem}[Pseudometric]\label{them:pseudometric}
For any $X, Y, Z \in \mathcal{P}_2(\mathbb{R}^d)$ and for any embedding procedure, the RISWIE distance is a pseudometric.
\end{theorem}


Deciding whether two point sets are identical up to a rigid transformation is computationally intractable in general, as it requires solving a combinatorial correspondence problem with factorial complexity in the worst case \citep{Chaudhury2015, ling2024generalizedorthogonalprocrustesproblem}. Consequently, one cannot reasonably expect a computable distance to satisfy identity of indiscernibles modulo rigid transformations, since doing so would amount to solving an NP-hard registration problem. Nevertheless, rigid equivalence can still be recovered in important special cases—such as Gaussian distributions—as a corollary of the next result.

\begin{theorem}[RISWIE Distance for Gaussians under PCA Embeddings]\label{them:RISWIE_PCA}
Let $A \sim \mathcal{N}(\omega_A, \Sigma_A)$ and $B \sim \mathcal{N}(\omega_B, \Sigma_B)$ be Gaussian probability measures on $\mathbb{R}^d$ with finite second moments so that they admit eigendecompositions $\Sigma_A = U_A \Lambda_A U_A^\top$ and $\Sigma_B = U_B \Lambda_B U_B^\top$, where $\Lambda_A = \mathrm{diag}(\lambda_1^A, \ldots, \lambda_d^A)$ and $\Lambda_B = \mathrm{diag}(\lambda_1^B, \ldots, \lambda_d^B)$ with $\lambda_1^A > \cdots > \lambda_d^A \geq 0$ and $\lambda_1^B > \cdots > \lambda_d^B \geq 0$. Denote
\[
\mathbf{a} := \big(\sqrt{\lambda_1^A}, \ldots, \sqrt{\lambda_d^A}\big), 
\quad 
\mathbf{b} := \big(\sqrt{\lambda_1^B}, \ldots, \sqrt{\lambda_d^B}\big).
\]

Then, the RISWIE distance (using all $d$ PCA axes) admits the closed-form:
\[
D^2(A, B) = \frac{1}{d} \|\mathbf{a} - \mathbf{b}\|_2^2.
\]

\end{theorem}

The square roots of the eigenvalues are standard deviations along a principal axis. This result is intuitive given that projecting a Gaussian distribution onto any vector yields another Gaussian.

\begin{theorem}[RISWIE–GW Comparison for Gaussians]\label{thm:comparison}
Let $A$ and $B$ satisfy the same assumptions as in Theorem \ref{them:RISWIE_PCA} and additionally be full rank. 
Let $a_i := \sqrt{\lambda_i^A}$ and $b_i := \sqrt{\lambda_i^B}$, and define
\[
\alpha := \min_{1\le i\le d} (a_i + b_i),
\qquad
\beta := \max_{1\le i\le d} (a_i + b_i).
\]
Then the RISWIE distance under PCA embeddings satisfies:
\begin{enumerate}[label=(\roman*)]
  \item \textbf{Upper bounds.}
  \begin{equation*}
  D^2(A,B) \;\le\;
  \frac{GW_2^2(A,B)}{8 d\, \alpha^2}
  \;+\;
  \frac{\|\Sigma_A\|_F\, \|\Sigma_B\|_F}{d\, \alpha^2}
  \left(1-\frac{1}{\sqrt{d}}\right),
  \end{equation*}
  and also
  \begin{equation*}
  \begin{aligned}
  D^2(A,B)
  &\;\le\; \frac{1}{2\sqrt{d}}
  \sqrt{
    GW_2^2(A,B)
    - 4\,\big(\operatorname{tr}(\Lambda_A)-\operatorname{tr}(\Lambda_B)\big)^2
    - 4\,\big(\|\Lambda_A\|_F-\|\Lambda_B\|_F\big)^2
  } \\
  &\;\le\; \frac{GW_2(A,B)}{2\sqrt{d}}.
  \end{aligned}
  \end{equation*}

  \item \textbf{Lower bound.}
  \begin{equation*}
  D(A,B)
  \;\ge\;
  \frac{GW_2(A,B)}{2\,\beta\,\sqrt{d(d+2)}}.
  \end{equation*}
\end{enumerate}
\end{theorem}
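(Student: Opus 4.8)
The starting point is the closed form of Theorem~\ref{thm:RISWIE_PCA}: with all $d$ PCA axes, $D^2(A,B)=\tfrac1d\|\mathbf a-\mathbf b\|_2^2=\tfrac1d\sum_{i=1}^d(\sqrt{\lambda_i^A}-\sqrt{\lambda_i^B})^2$, the eigenvalues being sorted decreasingly on both sides, so the statement reduces to comparing this spectral quantity with a lower bound on $GW_2^2(A,B)$. My first move is to pass from square-root-eigenvalue differences to eigenvalue differences: the identity $(\sqrt u-\sqrt v)^2=(u-v)^2/(\sqrt u+\sqrt v)^2$ together with $\sqrt{\lambda_i^A}+\sqrt{\lambda_i^B}=a_i+b_i\ge\alpha>0$ (this is where full rank enters) gives, termwise, $(\sqrt{\lambda_i^A}-\sqrt{\lambda_i^B})^2\le(\lambda_i^A-\lambda_i^B)^2/\alpha^2$, hence $D^2(A,B)\le\|\Lambda_A-\Lambda_B\|_F^2/(d\alpha^2)$. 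For the sharper estimate needed in (ii) I would instead use the elementary inequality $(\sqrt u-\sqrt v)^2\le|u-v|$ and Cauchy--Schwarz on $\mathbb{R}^d$, giving $D^2(A,B)\le\tfrac1d\sum_i|\lambda_i^A-\lambda_i^B|\le\tfrac1{\sqrt d}\|\Lambda_A-\Lambda_B\|_F$, equivalently $d\,D^4(A,B)\le\|\Lambda_A-\Lambda_B\|_F^2$.

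The key input is a spectral lower bound for the Gaussian Gromov--Wasserstein distance. Expanding the GW functional for centered Gaussians in terms of pairwise-distance statistics and writing $S:=\|X-X'\|^2$, $T:=\|Y-Y'\|^2$, the low-order moments are independent of the coupling $\pi$: one has $\mathbb{E}_\pi[S]=2\operatorname{tr}\Lambda_A$ and, by the Gaussian identity $\mathbb{E}\|\mathcal N(0,\Gamma)\|^4=(\operatorname{tr}\Gamma)^2+2\|\Gamma\|_F^2$ with $\Gamma=2\Sigma_A$, $\mathbb{E}_\pi[S^2]=4(\operatorname{tr}\Lambda_A)^2+8\|\Lambda_A\|_F^2$, and likewise for $T$. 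Thus only the cross term $\mathbb{E}_\pi[ST]$, equivalently $\Cov_\pi(S,T)$, depends on $\pi$, and bounding it from above, crudely by Cauchy--Schwarz and sharply via the distance-distribution (M\'emoli) lower bound using that both spectra are sorted decreasingly (so the comonotone eigenvalue matching is extremal), yields a bound of the form
\[
GW_2^2(A,B)\ \ge\ 4\bigl(\operatorname{tr}\Lambda_A-\operatorname{tr}\Lambda_B\bigr)^2+4\bigl(\|\Lambda_A\|_F-\|\Lambda_B\|_F\bigr)^2+4\,\|\Lambda_A-\Lambda_B\|_F^2 ,
\]
with weaker consequences such as $GW_2^2(A,B)\ge 8(\|\Lambda_A\|_F-\|\Lambda_B\|_F)^2$.

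With this bound in hand the two claims follow by bookkeeping. For (ii), combining $d\,D^4(A,B)\le\|\Lambda_A-\Lambda_B\|_F^2$ with the displayed GW bound gives $4d\,D^4(A,B)\le GW_2^2(A,B)-4(\operatorname{tr}\Lambda_A-\operatorname{tr}\Lambda_B)^2-4(\|\Lambda_A\|_F-\|\Lambda_B\|_F)^2$; taking positive square roots yields the intermediate inequality (with $\Lambda_0=\Lambda_A$ and $\Lambda_1=\Lambda_B$), and dropping the two nonnegative subtracted terms yields $D^2(A,B)\le GW_2(A,B)/(2\sqrt d)$. For (i), starting from $D^2(A,B)\le\|\Lambda_A-\Lambda_B\|_F^2/(d\alpha^2)$, I would expand $\|\Lambda_A-\Lambda_B\|_F^2=(\|\Lambda_A\|_F-\|\Lambda_B\|_F)^2+2\bigl(\|\Lambda_A\|_F\|\Lambda_B\|_F-\langle\Lambda_A,\Lambda_B\rangle\bigr)$, bound $(\|\Lambda_A\|_F-\|\Lambda_B\|_F)^2$ by $GW_2^2(A,B)/8$ using the weak GW bound, and control the cross-correlation defect $\|\Lambda_A\|_F\|\Lambda_B\|_F-\langle\Lambda_A,\Lambda_B\rangle$ via the inequality $\langle u,v\rangle\ge d^{-1/2}\|u\|_2\|v\|_2$, valid for nonnegative vectors sorted in the same order (a sharpening of Chebyshev's sum inequality, provable by induction on the dimension), applied to $u=(\lambda_i^A)$ and $v=(\lambda_i^B)$; this is what produces the dimension factor $(1-d^{-1/2})$ on the term $\|\Sigma_A\|_F\|\Sigma_B\|_F$.

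The main obstacle is the second step: producing the Gaussian GW lower bound in precisely this spectral form, with the displayed constants and, crucially for (ii), a residual term of size at least $4\|\Lambda_A-\Lambda_B\|_F^2$ so that it absorbs the RISWIE contribution. Because $\sup_\pi\Cov_\pi(\|X-X'\|^2,\|Y-Y'\|^2)$ need not be attained by a Gaussian coupling, the sharp estimate cannot come from a plain Cauchy--Schwarz bound; one must go through the one-dimensional distance-distribution reduction (or the inner-product reformulation of GW between Gaussians) and a rearrangement argument establishing that the decreasing-order spectral matching is the worst case, then track the slack carefully enough to retain that residual. Everything in the remaining steps, namely the algebraic identities, the Cauchy--Schwarz, Chebyshev and reverse-triangle inequalities, and the norm comparisons, is routine.
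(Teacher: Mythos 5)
Your reduction of both parts to the spectral inequality
\[
GW_2^2(A,B)\;\ge\;4\bigl(\operatorname{tr}\Lambda_A-\operatorname{tr}\Lambda_B\bigr)^2+4\bigl(\|\Lambda_A\|_F-\|\Lambda_B\|_F\bigr)^2+4\,\|\Lambda_A-\Lambda_B\|_F^2
\]
is exactly the right skeleton, and your treatment of the RISWIE side (termwise $(\sqrt u-\sqrt v)^2\le (u-v)^2/\alpha^2$ for (i), $(\sqrt u-\sqrt v)^2\le|u-v|$ plus Cauchy--Schwarz for (ii)) coincides with the paper's. The genuine gap is that this spectral lower bound is precisely the quantity $LGW_2^2$ of \citet{salmona2022}, which the paper simply cites; you instead propose to rederive it from the moment identities for $S=\|X-X'\|^2$, $T=\|Y-Y'\|^2$ together with a rearrangement/distance-distribution argument, and you acknowledge that this derivation is not carried out (``the main obstacle''). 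What is actually needed is a coupling-uniform bound of the form $\sup_\pi\mathbb{E}_\pi[ST]\le 4\operatorname{tr}\Lambda_A\operatorname{tr}\Lambda_B+4\|\Lambda_A\|_F\|\Lambda_B\|_F+4\langle\Lambda_A,\Lambda_B\rangle$, which plain Cauchy--Schwarz does not give and which M\'emoli's distance-distribution bound does not obviously give either; it is a nontrivial theorem in its own right. As written, the keystone of your argument is unproven, so the proposal is incomplete unless you, like the paper, invoke the known Gaussian GW bounds.

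Granting that lemma, your part (ii) is the paper's proof verbatim. For part (i) you take a genuinely different route: the paper bounds $D^2\le GGW_2^2/(8d\alpha^2)$ using the coefficient $8$ on $\|\Lambda_A-\Lambda_B\|_F^2$ in Salmona's \emph{upper} bound $GGW_2^2$, and then applies their gap estimate $GGW_2^2-LGW_2^2\le 8\|\Sigma_A\|_F\|\Sigma_B\|_F(1-1/\sqrt d)$, which is where the factor $(1-1/\sqrt d)$ comes from; you replace this with the inequality $\langle\Lambda_A,\Lambda_B\rangle\ge d^{-1/2}\|\Lambda_A\|_F\|\Lambda_B\|_F$ for nonnegative similarly ordered vectors (which is indeed true). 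That substitution can work, but your bookkeeping as stated loses a factor of $2$: bounding $(\|\Lambda_A\|_F-\|\Lambda_B\|_F)^2\le GW_2^2/8$ and the full defect $2\bigl(\|\Lambda_A\|_F\|\Lambda_B\|_F-\langle\Lambda_A,\Lambda_B\rangle\bigr)$ separately yields $D^2\le GW_2^2/(8d\alpha^2)+2\|\Sigma_A\|_F\|\Sigma_B\|_F(1-1/\sqrt d)/(d\alpha^2)$, which is weaker than the claimed bound. To recover the stated constant, keep both terms of the lower bound, $GW_2^2/8\ge\tfrac12(\|\Lambda_A\|_F-\|\Lambda_B\|_F)^2+\tfrac12\|\Lambda_A-\Lambda_B\|_F^2$, so that only half of $\|\Lambda_A-\Lambda_B\|_F^2$ must be absorbed by the Chebyshev-type term; with that repair your route gives exactly the theorem's constants and is a legitimate alternative to the paper's use of the $GGW$--$LGW$ gap.
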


Gromov-Wasserstein for Gaussians has no closed form, but there have been proven lower and upper bounds for it in the Gaussian case \citep{salmona2022}. Interestingly, we were able to relate RISWIE$^2$ to both $GW_2$ and $GW_2^2$. The $\alpha$ normalization resolves the difference in units.

\subsection{Statistical Properties}\label{section:stats}
\begin{figure*}[t]
    \centering
\includegraphics[width=1.0\linewidth, trim=0 0 0 0, clip]{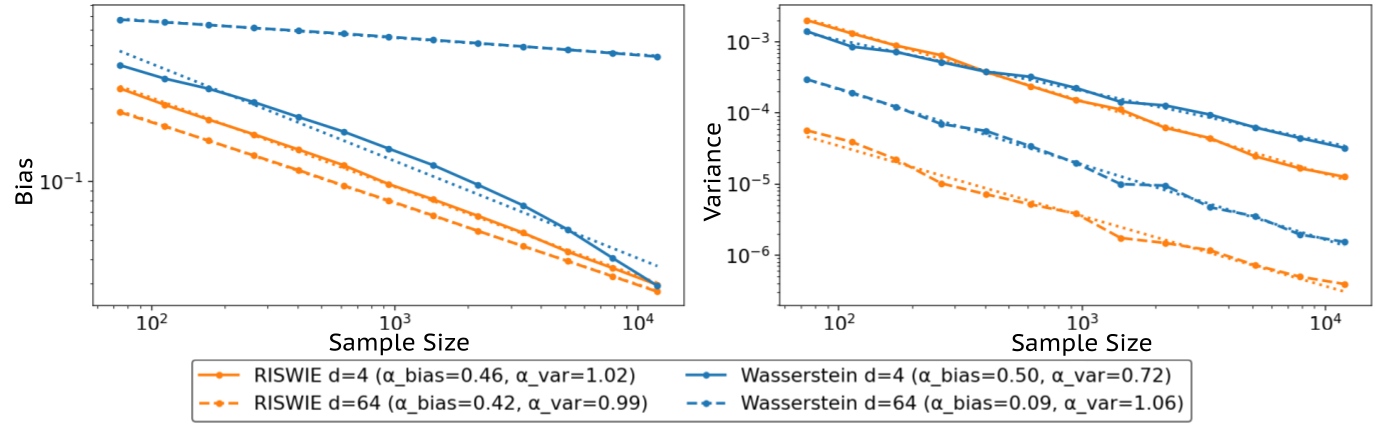}
  \caption{RISWIE-PCA vs.\ OT: bias (left) and variance (right). RISWIE bias and variance do not worsen in higher dimensions. Ground-truth population distances are calculated with the Gaussian closed form from both distances. The exponent $\alpha$ corresponds to the empirical decay rate in the log--log plot: we fit a power law of the form $A n^{-\alpha}$ to each curve to estimate the convergence rate.}
  \label{fig:riswie-bias-var}
\end{figure*}
As one may expect, under reasonable assumptions,  $
D(\hat{\mu}_n, \hat{\nu}_n) \xrightarrow{\text{a.s.}} D(\mu, \nu) \text{ as } n \to \infty$ where $\hat{\mu}_n, \hat{\nu}_n$ denote empirical measures of size $n$ drawn i.i.d. from $\mu, \nu$ (see Theorem \ref{thm:consistency} in the Appendix). However, a finite sample will always include bias. Consider $D(\mu, 
\mu) =0$, yet $
\mathbb{E}\big[D(\hat{\mu}_n, \hat{\mu}_n')\big] > 0$ where $\hat{\mu}_n'$ is an another independent empirical measure of size $n$ drawn i.i.d. from $\mu$. Thus, it is important to consider the bias and variance of $D(\hat{\mu}_n, \hat{\nu}_n)$.

Figure~\ref{fig:riswie-bias-var} empirically investigates the finite-sample convergence guarantees of the RISWIE-PCA and Wasserstein-2 distances relative to the population distance, which are made possible by the Gaussian closed-form that each distance has. We sample $n$ points from two Gaussian distributions repeatedly, recording the empirical distances between the resulting point clouds and comparing their average to the true population value (bias), as well as their sample variance across trials.

RISWIE exhibits strong empirical statistical behavior--for both low and high-dimensional settings, the bias scales as $O(n^{-1/2})$ and variance as $O(n^{-1})$. In contrast, $W_2$ converges with a rate of $O(n^{-1/d})$, meaning exponentially many samples are needed to get the same error as in lower dimensions \citep{weed2017sharpasymptoticfinitesamplerates}. This is problematic given the computational cost associated with more samples for $W_2$ and similar distances.
\section{Experiments}
\label{section:experiments}

We evaluated RISWIE with PCA embeddings in classification tasks, using the MPI-FAUST dataset of human meshes~\citep{FAUST2014} and spatially resolved tissue data from the HuBMAP consortium~\citep{hickey2023intestine}. The below numerical results quantify computational efficiency and assess discriminative, clustering, and classification performance relative to existing distances. We use the Python Optimal Transport (POT) library’s implementations of Gromov–Wasserstein (via an approximate solver) and Wasserstein (standard OT) in our comparisons~\citep{flamary2021pot, flamary2024pot}, and the Procrustes-Wasserstein implementation from \citet{adamo2025depthlookprocrusteswassersteindistance}. For FAUST and HuBMAP experiments, we sample 64 axes to ensure robustness against variability in sampling from the unit sphere.

\begin{figure*}[t]
  \centering  \includegraphics[width=\textwidth,height=0.45\textheight,keepaspectratio]{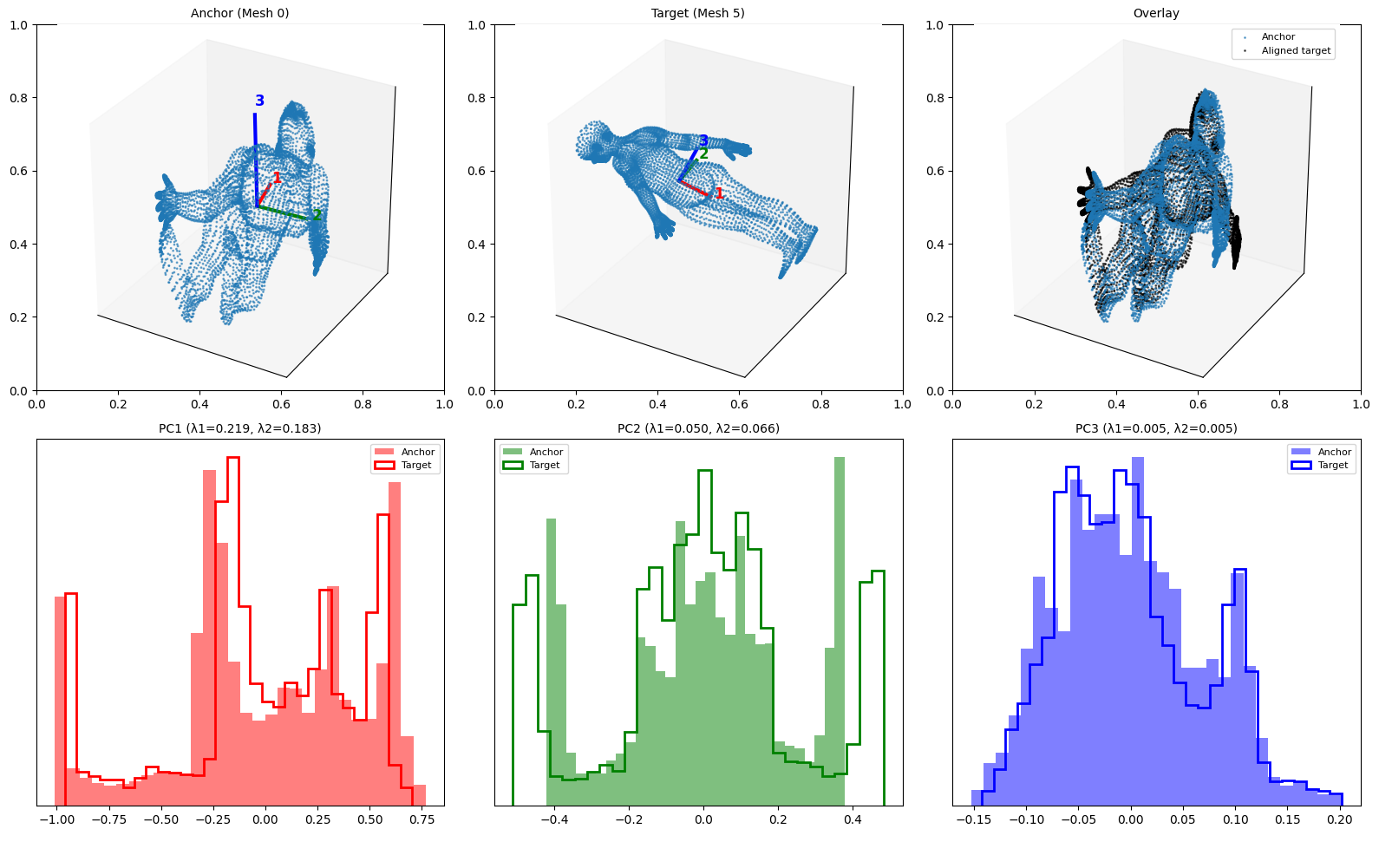}
  \caption{3D Example of RISWIE alignment. We illustrate how RISWIE aligns two point clouds by matching their marginal distributions along embedded axes. This method naturally extends to higher dimensions. For each axis of the anchor shape, we evaluate all possible pairings with axes of the target, including sign flips (reflections) to minimize the 1D Wasserstein cost. The second row shows the optimal axis matching determined by this process, and we show the poses overlaid with this alignment procedure.}
  \label{fig:main-riswie}
  \vspace{-4pt}
\end{figure*}

\subsection{Human Pose Alignment and Discrimination}
\label{subsec:pose-clustering}

On MPI-FAUST, we treat each registered mesh as a point cloud and compare pairs from the same subject under distinct pose and orientations. As shown in Figure~\ref{fig:main-riswie}, RISWIE aligns the target to the anchor by matching principal axes up to permutation and sign. After alignment, the point clouds overlay closely and their 1D marginals along the first three principal components nearly coincide, indicating robustness to rigid motions.

We further evaluate unsupervised pose clustering on MPI-FAUST (10 subjects $\times$ 10 poses). For each method, we compute a $100\times 100$ pairwise distance matrix and embed each mesh as a row. For consistency, all distances are calculated with $1000$ subsampled vertices per mesh. This is done for the computability of Wasserstein and Gromov-Wasserstein. However, RISWIE could use all $6890$ vertices at negligible extra cost.

We evaluate K-Means, Spectral, Agglomerative, and t-SNE–based clustering on mesh embeddings (the row vectors of the distance matrix), measuring performance with V-measure, ARI, and accuracy.
Table~\ref{tab:pose-clustering} reports V-measure: RISWIE matches or outperforms GW and other baselines across clustering strategies. Over our grid of settings, RISWIE surpasses GW in V-measure and NMI in $90.9\%$ of cases and in ARI and accuracy in $100\%$ of cases, while computing the full distance matrix in $\sim$10 seconds versus $\sim$5 hours for GW. Thus, regardless of the clustering method used in unsupervised learning, RISWIE provides consistently strong and efficient performance.
\begin{table}[H]
\centering
\caption{Mean V-measure (across subsampling with 10 different seeds) by method and distance function on MPI-FAUST pose clustering.}
\label{tab:pose-clustering}
\setlength{\tabcolsep}{4pt}
\begin{tabular}{lcccccc}
\toprule
Distance & Euclidean & Gromov & Wasserstein & \text{Procrustes} & RISWIE & Sliced \\
Pipeline & & & & & & \\
\midrule
Agglomerative (avg, precomp) & 0.2214 & 0.6568 & 0.6715 & 0.6794 & \textbf{0.8094} & 0.5478 \\
KMeans (dist rows)           & 0.3778 & 0.5930 & 0.5967 & 0.4918 & \textbf{0.7839} & 0.4331 \\
Spectral (RBF of dist)       & 0.3721 & 0.5630 & 0.5757 & 0.5924 & \textbf{0.8138} & 0.6291 \\
t-SNE-2D + KMeans            & 0.4066 & 0.6649 & 0.6480 & 0.7209 & \textbf{0.8612} & 0.6329 \\
t-SNE-2D + Spectral          & 0.3907 & 0.6481 & 0.6136 & 0.6417 & \textbf{0.8196} & 0.6173 \\
\midrule
AUC-ROC (same-vs-different)  & 0.6099 & 0.8929 & 0.8603 & 0.8677 & \textbf{0.9404} & 0.7843 \\
\bottomrule
\end{tabular}
\end{table}

\subsection{Tissue Clustering}
\label{subsec:cells}

We evaluate RISWIE on two-dimensional tissue slices of the human small intestine, where each slice is represented as a point cloud of cell coordinates~\citep{hickey2023intestine}, orientated arbitrarily. Ground-truth labels group slices by intestine identity.

\begin{figure}[!b]
    \centering
    \includegraphics[width=0.5\textwidth,height=0.45\textheight,keepaspectratio]{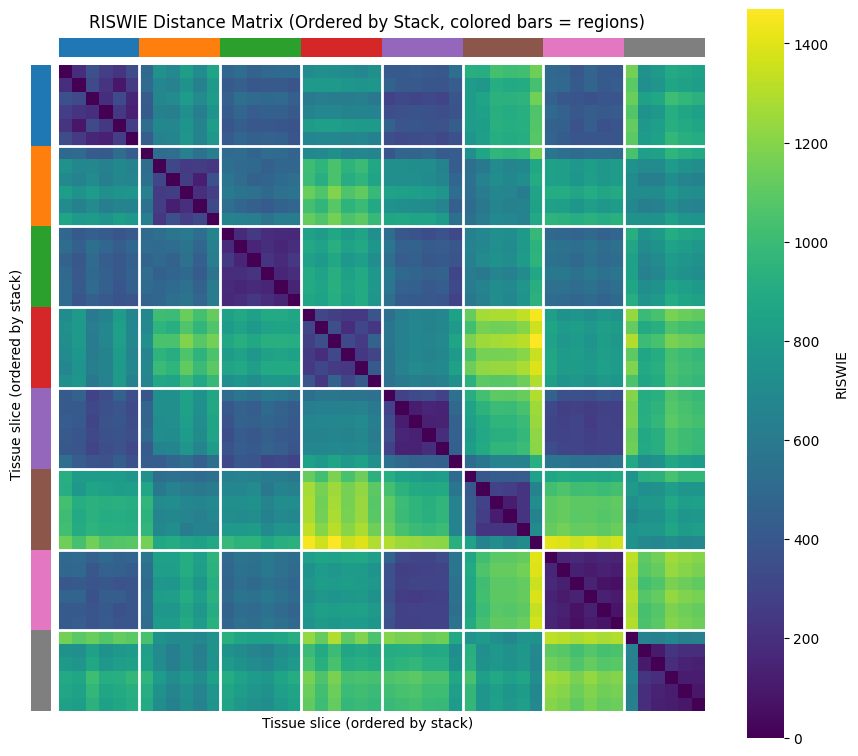}
    \caption{RISWIE Distance matrix for the HuBMAP tissue slices. Each block along the diagonal corresponds to slices from the same tissue stack. Within a block, RISWIE distances are consistently near zero, indicating strong invariance to small perturbations and local alignment of slices from the same sample.}
    \label{fig:distancematrix}
\end{figure}

We compute the all-pairs RISWIE distance matrix between point clouds from different tissue types and vertical slices. Each block in the matrix compares all slices of one tissue to all slices of another. Since each slice may be arbitrarily rotated or reflected, a rigid-invariant distance should yield low pairwise values within diagonal blocks (same tissue), despite variations in orientation or sampling. Figure~\ref{fig:distancematrix} highlights RISWIE’s robustness to such transformations, showing consistently low intra-tissue distances.

Table~\ref{tab:cells-runtime} reports runtime and stack assignment accuracy across distances. For clustering/assignment, we apply a farthest-point seeding strategy with greedy assignment based on intra-cluster distances, with more information available in Appendix \ref{sec:cells-experiment}.
RISWIE achieves sub-second computation and the highest accuracy (95.8\%), while Gromov–Wasserstein is slower by over four orders of magnitude. Sliced Wasserstein and classical Wasserstein are faster than GW but substantially less accurate.
\begin{table}[h]
\centering
\small
\setlength{\tabcolsep}{10pt}
\begin{tabular}{lcc}
\toprule
\textbf{Distance} & \textbf{Time (s)} & \textbf{Accuracy} \\
\midrule
RISWIE              & \textbf{1}     & \textbf{95.83}\% \\
Gromov--Wasserstein & 10352          & 85.42\% \\
Procrustes--Wasserstein & 4952       & 77.08\% \\
Sliced Wasserstein  & 2              & 52.08\% \\
Wasserstein         & 111            & 54.17\% \\
\bottomrule
\end{tabular}
\caption{Cells dataset: runtime and stack assignment accuracy using 1000-point subsampling.}
\label{tab:cells-runtime}
\end{table}

Beyond assignment, RISWIE provides stronger discriminative power. Using pairwise distances to score same-intestine versus different-intestine pairs, RISWIE achieves an AUC-ROC of 0.943 compared to 0.921 for Gromov--Wasserstein or 0.829 for Procrustes--Wasserstein (all under identical subsampling). Since RISWIE scales nearly linearly with sample size, it can exploit larger point sets with little additional cost, which would further improve discriminatory power. However, we again subsample the same number of points for consistency.

\section{Discussion}\label{section:discussion}
Our empirical results demonstrate that the massive computational benefits of RISWIE do not degrade accuracy -- in fact, they can often improve it relative to approximate solvers. On both applications, RISWIE achieved higher AUC than approximating Gromov-Wasserstein or Procrustes-Wasserstein in 4 orders of magnitude less time. RISWIE also recovers a signed axis permutation between axes, which, when using PCA, can be used to give an explicit rotation/reflection aligning two shapes. As a result, we can define rigid-invariant versions of any distance function--apply RISWIE’s alignment step and then evaluate the other distance.  This makes RISWIE useful both as a standalone distance measure and as a preprocessing step for downstream geometric data analysis.


Two limitations should be noted. First, our method relies on discrete axis matchings, corresponding to optimization over the vertices of the (signed) Birkhoff polytope. Consequently, the objective is non-differentiable, which limits compatibility with gradient-based learning frameworks \citep{alvarezmelis2018gromovwassersteinalignmentwordembedding}. In Appendix \ref{appendix:variants}, we introduce a generalization that relaxes this vertex-level optimization to the interior of the Birkhoff polytope using entropic optimal transport, yielding probabilistic axis matchings. This formulation preserves rigid invariance, is fully differentiable, and recovers the original method as a limiting case of the regularization parameters.
 
 Second, performance depends on embedding stability, since small eigengaps render individual embedding axes non-identifiable. RISWIE is generally robust to this ambiguity by optimizing over all signed permutations of the embedding basis (see Appendix \ref{appendix:degenerate}).  Geometrically, RISWIE matches one-dimensional subspaces, corresponding to points on the Grassmannian $\mathrm{Gr}(1,d)$. A natural extension to handle $k$-dimensional degenerate eigenspaces is to treat the associated eigenspaces as higher-dimensional blocks and match them as elements of the Grassmannian $\mathrm{Gr}(k,d)$.

Together, these experimental and theoretical results position RISWIE as a practical tool for large-scale geometric data analysis and a foundation for future work on invariant transport methods.

\section*{Acknowledgments}

This work was supported by the National Science Foundation under Grant DMS-2038056. The authors thank Jiajia Yu, Heekyoung Hahn, and Lenny Ng for their guidance and valuable feedback throughout the project.



\bibliography{gram2026}
\bibliographystyle{gram2026}

\appendix
\section{Appendix}
 
\subsection{Timing Results}

\begin{figure}[H]
    \centering
    \includegraphics[width=1\linewidth]{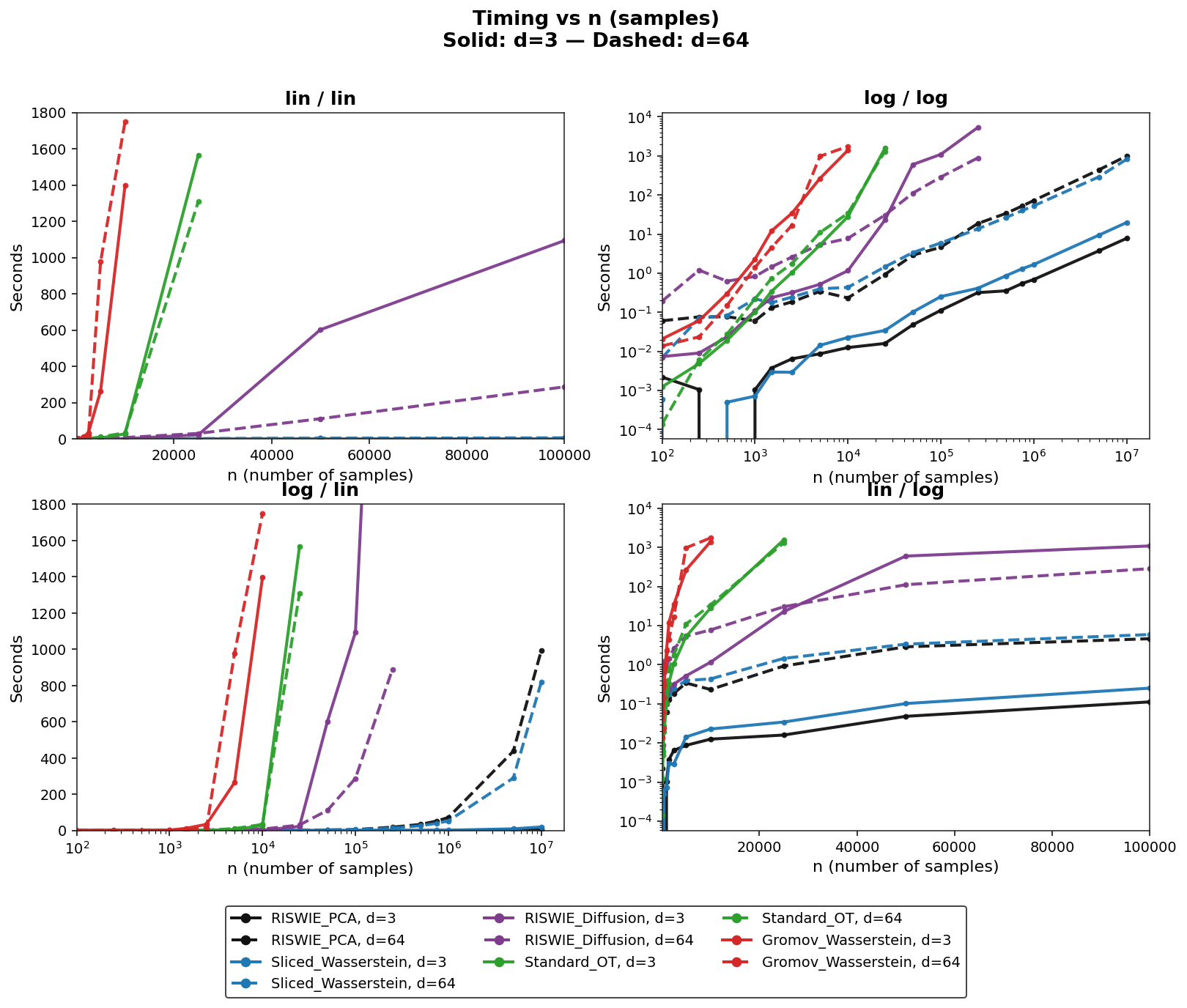}
    \label{fig:placeholder}
\end{figure}

For our timing experiments, we set the number of projection axes for Sliced Wasserstein to $\max(10, d \log d)$ and the number of embedding functions of RISWIE-PCA to $d$. The former is done to make Sliced Wasserstein robust to bad sampling directions as they are not data dependent. For diffusion-based RISWIE, we implement diffusion maps by building a sparse neighborhood graph with  $k = \lceil d \log n \rceil$ neighbors, then apply heat-kernel affinities and symmetric normalization before computing the top $d$ eigenvectors.

\subsection{FAUST Full Experiment}

\begin{table}[h]
\centering
\small
\caption{Description of clustering pipelines used in the experiments.}
\begin{tabular}{@{}lp{9.5cm}@{}}
\toprule
\textbf{Pipeline Label} & \textbf{Description} \\
\midrule
KMeans (dist rows) & KMeans on rows of the pairwise distance matrix as Euclidean vectors. \\
KMedoids (precomputed dist) & KMedoids using the full precomputed pairwise distance matrix. \\
Agglomerative (avg, precomp) & Average-linkage agglomerative clustering on the precomputed distance matrix. \\
Spectral (RBF of dist) & Spectral clustering using an RBF kernel of the distance matrix: \\
& \quad $A_{ij} = \exp\left(-D_{ij}^2 / (2\sigma^2)\right)$ with $\sigma = \mathrm{median}(D[D > 0])$. \\
MDS-2D + KMeans & 2D MDS embedding of distances followed by KMeans. \\
MDS-3D + KMeans & 3D MDS embedding of distances followed by KMeans. \\
MDS-2D + Spectral & 2D MDS embedding, RBF kernel on embedded points, then Spectral clustering. \\
t-SNE-2D + KMeans & 2D t-SNE on precomputed distances (perplexity 10), then KMeans. \\
t-SNE-3D + KMeans & 3D t-SNE on precomputed distances, then KMeans. \\
t-SNE-2D + Spectral & 2D t-SNE followed by RBF kernel and Spectral clustering. \\
t-SNE-3D + Spectral & 3D t-SNE followed by RBF kernel and Spectral clustering. \\
\bottomrule
\end{tabular}

\label{tab:clustering-pipeline-glossary}
\end{table}

Tables for this experiment include clustering pipelines, where abbreviations like “avg, precomp”, “dist rows”, and “RBF of dist” refer to specific clustering setups described in the table caption and glossary.

\begin{figure}[H]
    \centering
    \includegraphics[width=0.8\textwidth,height=0.45\textheight,keepaspectratio]{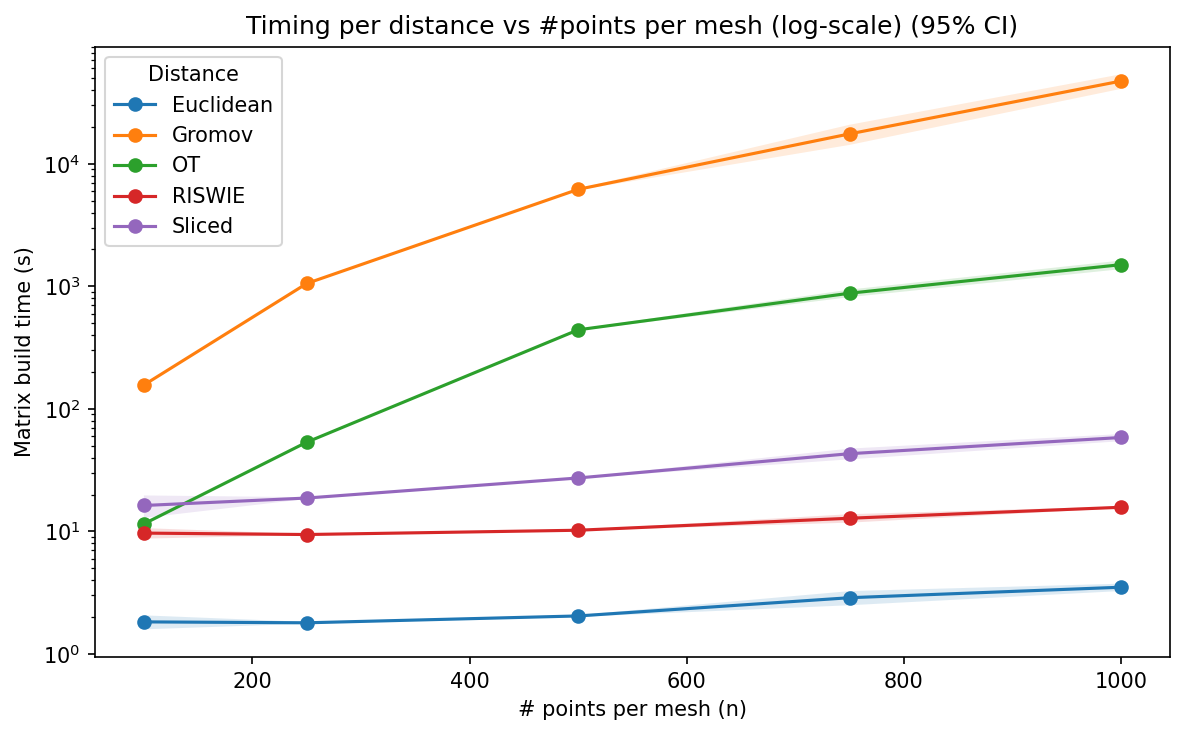}
    \caption{Matrix-build time versus number of points per mesh \(n\) (log-scale). 
  \text{RISWIE} grows gently with \(n\) and stays well below Sliced/OT, 
  while Gromov--Wasserstein is the slowest by far.}
    \label{fig:timing_faust}
\end{figure}

\begin{table}[h]
\centering
\caption{Accuracy by Method and Distance Function}
\begin{tabular}{lccccc}
\toprule
Method & RISWIE & Gromov & OT & Euclidean & Sliced \\
\midrule
KMeans (dist rows)       & \textbf{0.7200} & 0.5700 & 0.5600 & 0.3500 & 0.3600 \\
Spectral (RBF of dist)   & \textbf{0.7800} & 0.7500 & 0.5300 & 0.3200 & 0.6000 \\
Agglomerative (avg, precomp) & \textbf{0.7200} & 0.5300 & 0.4600 & 0.1400 & 0.4500 \\
MDS-2D + KMeans          & \textbf{0.7300} & 0.5800 & 0.5400 & 0.3100 & 0.4200 \\
MDS-2D + Spectral        & \textbf{0.5800} & 0.4600 & 0.4300 & 0.3200 & 0.3300 \\
MDS-3D + KMeans          & \textbf{0.7800} & 0.7000 & 0.5000 & 0.3200 & 0.4300 \\
MDS-3D + Spectral        & \textbf{0.7300} & 0.6700 & 0.5200 & 0.3100 & 0.4200 \\
t-SNE-2D + KMeans        & \textbf{0.8700} & 0.8200 & 0.6500 & 0.4100 & 0.6100 \\
t-SNE-2D + Spectral      & \textbf{0.7200} & 0.6800 & 0.5600 & 0.4100 & 0.5300 \\
t-SNE-3D + KMeans        & \textbf{0.8000} & 0.7500 & 0.5300 & 0.3500 & 0.5200 \\
t-SNE-3D + Spectral      & \textbf{0.7600} & 0.6800 & 0.5700 & 0.3000 & 0.5000 \\
\bottomrule
\end{tabular}
\end{table}

\begin{table}[h]
\centering
\caption{V-measure by Method and Distance Function}
\begin{tabular}{lccccc}
\toprule
Method & RISWIE & Gromov & OT & Euclidean & Sliced \\
\midrule
KMeans (dist rows)            & \textbf{0.8058} & 0.6802 & 0.5957 & 0.4007 & 0.4373 \\
Spectral (RBF of dist)        & 0.8238 & \textbf{0.8303} & 0.5790 & 0.3220 & 0.6437 \\
Agglomerative (avg, precomp)  & \textbf{0.8082} & 0.7420 & 0.6763 & 0.2137 & 0.6092 \\
MDS-2D + KMeans               & \textbf{0.7454} & 0.6721 & 0.5506 & 0.2986 & 0.4386 \\
MDS-2D + Spectral             & \textbf{0.7065} & 0.5958 & 0.4921 & 0.3161 & 0.3510 \\
MDS-3D + KMeans               & \textbf{0.8231} & 0.7879 & 0.5818 & 0.2870 & 0.4892 \\
MDS-3D + Spectral             & \textbf{0.7789} & 0.7422 & 0.5700 & 0.3162 & 0.4676 \\
t-SNE-2D + KMeans             & \textbf{0.8829} & 0.8577 & 0.6779 & 0.4138 & 0.6246 \\
t-SNE-2D + Spectral           & \textbf{0.8291} & 0.7896 & 0.6357 & 0.3954 & 0.6022 \\
t-SNE-3D + KMeans             & \textbf{0.7832} & 0.7606 & 0.5847 & 0.3486 & 0.5281 \\
t-SNE-3D + Spectral           & \textbf{0.7754} & 0.7039 & 0.5843 & 0.2856 & 0.4686 \\
\bottomrule
\end{tabular}
\end{table}

\begin{table}[h]
\centering
\caption{Adjusted Rand Index (ARI) by Method and Distance Function}
\begin{tabular}{lccccc}
\toprule
Method & RISWIE & Gromov & OT & Euclidean & Sliced \\
\midrule
KMeans (dist rows)           & \textbf{0.5844} & 0.3910 & 0.3673 & 0.1359 & 0.1618 \\
Spectral (RBF of dist)       & \textbf{0.6825} & 0.6154 & 0.3312 & 0.0944 & 0.4277 \\
Agglomerative (avg, precomp) & \textbf{0.5526} & 0.4197 & 0.3796 & 0.0171 & 0.3498 \\
MDS-2D + KMeans              & \textbf{0.5454} & 0.3906 & 0.3067 & 0.0486 & 0.1723 \\
MDS-2D + Spectral            & \textbf{0.4363} & 0.2881 & 0.2318 & 0.0696 & 0.1078 \\
MDS-3D + KMeans              & \textbf{0.6531} & 0.5645 & 0.3336 & 0.0499 & 0.2214 \\
MDS-3D + Spectral            & \textbf{0.5576} & 0.5028 & 0.3427 & 0.0732 & 0.2026 \\
t-SNE-2D + KMeans            & \textbf{0.7965} & 0.7416 & 0.4946 & 0.1765 & 0.4116 \\
t-SNE-2D + Spectral          & \textbf{0.6436} & 0.5718 & 0.4102 & 0.1480 & 0.3569 \\
t-SNE-3D + KMeans            & \textbf{0.6529} & 0.6085 & 0.3552 & 0.1013 & 0.3136 \\
t-SNE-3D + Spectral          & \textbf{0.6107} & 0.4572 & 0.3301 & 0.0584 & 0.2254 \\
\bottomrule
\end{tabular}
\end{table}

\begin{table}[h]
\centering
\caption{Normalized Mutual Information (NMI) by Method and Distance Function}
\begin{tabular}{lccccc}
\toprule
Method & RISWIE & Gromov & OT & Euclidean & Sliced \\
\midrule
KMeans (dist rows)           & \textbf{0.8058} & 0.6802 & 0.5957 & 0.4007 & 0.4373 \\
Spectral (RBF of dist)       & 0.8238 & \textbf{0.8303} & 0.5790 & 0.3220 & 0.6437 \\
Agglomerative (avg, precomp) & \textbf{0.8082} & 0.7420 & 0.6763 & 0.2137 & 0.6092 \\
MDS-2D + KMeans              & \textbf{0.7454} & 0.6721 & 0.5506 & 0.2986 & 0.4386 \\
MDS-2D + Spectral            & \textbf{0.7065} & 0.5958 & 0.4921 & 0.3161 & 0.3510 \\
MDS-3D + KMeans              & \textbf{0.8231} & 0.7879 & 0.5818 & 0.2870 & 0.4892 \\
MDS-3D + Spectral            & \textbf{0.7789} & 0.7422 & 0.5700 & 0.3162 & 0.4676 \\
t-SNE-2D + KMeans            & \textbf{0.8829} & 0.8577 & 0.6779 & 0.4138 & 0.6246 \\
t-SNE-2D + Spectral          & \textbf{0.8291} & 0.7896 & 0.6357 & 0.3954 & 0.6022 \\
t-SNE-3D + KMeans            & \textbf{0.7832} & 0.7606 & 0.5847 & 0.3486 & 0.5281 \\
t-SNE-3D + Spectral          & \textbf{0.7754} & 0.7039 & 0.5843 & 0.2856 & 0.4686 \\
\bottomrule
\end{tabular}
\end{table}

\begin{table}[h]
\centering
\small
\caption{Clustering performance using RISWIE with no subsampling. 
Accuracy, V-measure, ARI, and NMI are reported across clustering pipelines.}
\label{tab:riswie_nosubsample}
\begin{tabular}{lcccc}
\toprule
\textbf{Method} & \textbf{Accuracy} & \textbf{V-measure} & \textbf{ARI} & \textbf{NMI} \\
\midrule
KMeans (dist rows)             & 0.7500 & 0.8469 & 0.6446 & 0.8469 \\
KMedoids (precomputed dist)    & 0.8200 & 0.8296 & 0.6966 & 0.8296 \\
Spectral (RBF of dist)         & 0.7900 & 0.8343 & 0.6921 & 0.8343 \\
Agglomerative (avg, precomp)   & 0.7800 & 0.8549 & 0.6655 & 0.8549 \\
MDS-2D + KMeans                & 0.7500 & 0.7756 & 0.5934 & 0.7756 \\
MDS-2D + KMedoids              & 0.7500 & 0.7666 & 0.5878 & 0.7666 \\
MDS-2D + Spectral              & 0.6600 & 0.7531 & 0.5121 & 0.7531 \\
MDS-3D + KMeans                & 0.7300 & 0.7517 & 0.5608 & 0.7517 \\
MDS-3D + KMedoids              & 0.7100 & 0.7541 & 0.5776 & 0.7541 \\
MDS-3D + Spectral              & 0.7200 & 0.7843 & 0.5382 & 0.7843 \\
t-SNE-2D + KMeans              & 0.8300 & 0.8498 & 0.7348 & 0.8498 \\
t-SNE-2D + KMedoids            & 0.8300 & 0.8498 & 0.7348 & 0.8498 \\
t-SNE-2D + Spectral            & 0.7000 & 0.8339 & 0.6081 & 0.8339 \\
t-SNE-3D + KMeans              & 0.7600 & 0.7850 & 0.6276 & 0.7850 \\
t-SNE-3D + KMedoids            & 0.7700 & 0.7633 & 0.6116 & 0.7633 \\
t-SNE-3D + Spectral            & 0.6400 & 0.7145 & 0.4688 & 0.7145 \\
\bottomrule
\end{tabular}
\end{table}

\begin{table}[h]
\centering
\small
\caption{V-measure (mean ± std) by method and distance function on MPI-FAUST pose clustering.}
\label{tab:vmeasure-std}
\resizebox{\textwidth}{!}{
\begin{tabular}{lccccc}
\toprule
Distance & Euclidean & Gromov & OT & RISWIE & Sliced \\
Pipeline &           &        &    &        &        \\
\midrule
Agglomerative (avg, precomp) & 0.2214 ± 0.0252 & 0.6568 ± 0.0586 & 0.6715 ± 0.0164 & \textbf{0.8094 ± 0.0268} & 0.5478 ± 0.0346 \\
KMeans (dist rows)           & 0.3778 ± 0.0257 & 0.5930 ± 0.0478 & 0.5967 ± 0.0259 & \textbf{0.7839 ± 0.0192} & 0.4331 ± 0.0292 \\
Spectral (RBF of dist)       & 0.3721 ± 0.0248 & 0.5630 ± 0.0412 & 0.5757 ± 0.0225 & \textbf{0.8138 ± 0.0190} & 0.6291 ± 0.0387 \\
t-SNE-2D + KMeans            & 0.4066 ± 0.0274 & 0.6649 ± 0.0447 & 0.6480 ± 0.0264 & \textbf{0.8612 ± 0.0270} & 0.6329 ± 0.0351 \\
t-SNE-2D + Spectral          & 0.3907 ± 0.0308 & 0.6481 ± 0.0482 & 0.6136 ± 0.0215 & \textbf{0.8196 ± 0.0183} & 0.6173 ± 0.0275 \\
\bottomrule
\end{tabular}
}
\end{table}

\subsection{Cells Full Experiment}
\label{sec:cells-experiment}


To evaluate RISWIE’s effectiveness in recovering biologically meaningful groupings, we perform balanced partitioning of tissue slices into spatial stacks based on the computed pairwise distances between tissue slices. We use a farthest-point seeding strategy to encourage diversity among initial stack centers and apply a greedy assignment procedure to add tissue slices to a cluster that they are most similar to.

In other words, we are trying to minimize

$$\mathcal{L}(\mathcal{S}_1, \dots, \mathcal{S}_K) = \sum_{k=1}^{K} \sum_{\substack{i, j \in \mathcal{S}_k \\ i < j}} D_{\text{Input Distance}}(X_i, X_j)$$

where $\mathcal{X} = \{X_1, X_2, \dots, X_n\}$ is the set of tissue slices and we want to partition them into stacks $\mathcal{S}_1, \dots, \mathcal{S}_K$, each of size $n/K$.

\begin{algorithm}[H]
\caption{Stack Assignment via RISWIE, Farthest-Point Seeding, and Greedy Assignment}
\textbf{Input:} Set of $n = 48$ regions (point clouds) $\{X_i\}$ \\
\textbf{Output:} Optimal grouping of regions into $K$ balanced stacks

\textbf{Step 1: Compute Distance Matrix} \\
\For{$i = 1$ to $n$}{
  \For{$j = i+1$ to $n$}{
    $D_{ij} \leftarrow \text{RISWIE\_distance}(X_i, X_j)$ \;
    $D_{ji} \leftarrow D_{ij}$ \;
  }
}

\textbf{Step 2: Farthest Point Seeding and Greedy Assignment} \\
\For{$s = 1$ to $n$ \tcp*[f]{Try each region as first seed}}{
  $S \leftarrow [s]$ \tcp*[f]{Seed indices} \\
  \While{$|S| < K$}{
    Select $t = \arg\max_{t \notin S} \min_{u \in S} D_{tu}$ \;
    Append $t$ to $S$ \;
  }

  Initialize $K$ stacks, each with one seed from $S$ \;

  \While{unassigned regions remain}{
    \For{each unassigned region $r$, and each stack $k$ not full}{
      Compute cost $c_{r,k} = \sum_{b \in \text{stack}_k} D_{r,b}$ \;
    }
    Assign $r^*$ to stack $k^*$ minimizing $c_{r,k}$, breaking ties arbitrarily \;
  }

  Compute total within-stack sum $C_s = \sum_{k=1}^{K} \sum_{i,j \in S_k,\; i < j} D_{ij}$ \;
  Store stacks and $C_s$ \;
}

Select the stack assignment with lowest within-stack sum, summed across all stacks: $\sum_s C_s$ \;

\textbf{Step 3 (Optional): Random Seeds} \\
Optionally repeat the greedy assignment with some number of random initializations of $K$ stacks and take the lowest cost stack assignment across all completed stacks.
\end{algorithm}

The assignment accuracy reported reflects the best label alignment between predicted and ground truth stacks, computed via Hungarian matching.
\begin{figure}
    \centering
    \includegraphics[width=1\linewidth]{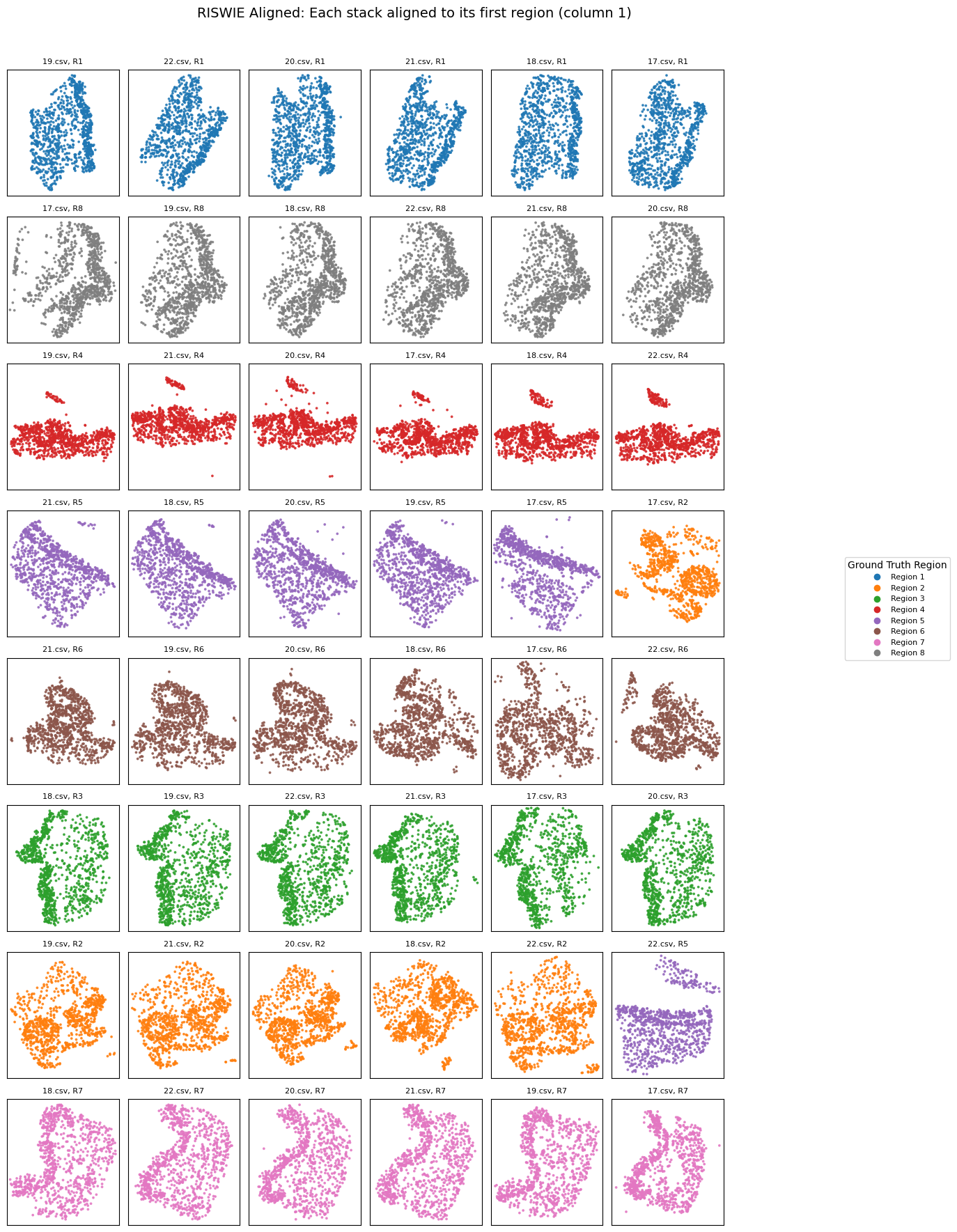}
    \label{fig:placeholder}
\end{figure}

\begin{figure}
    \centering
    \includegraphics[width=1\linewidth]{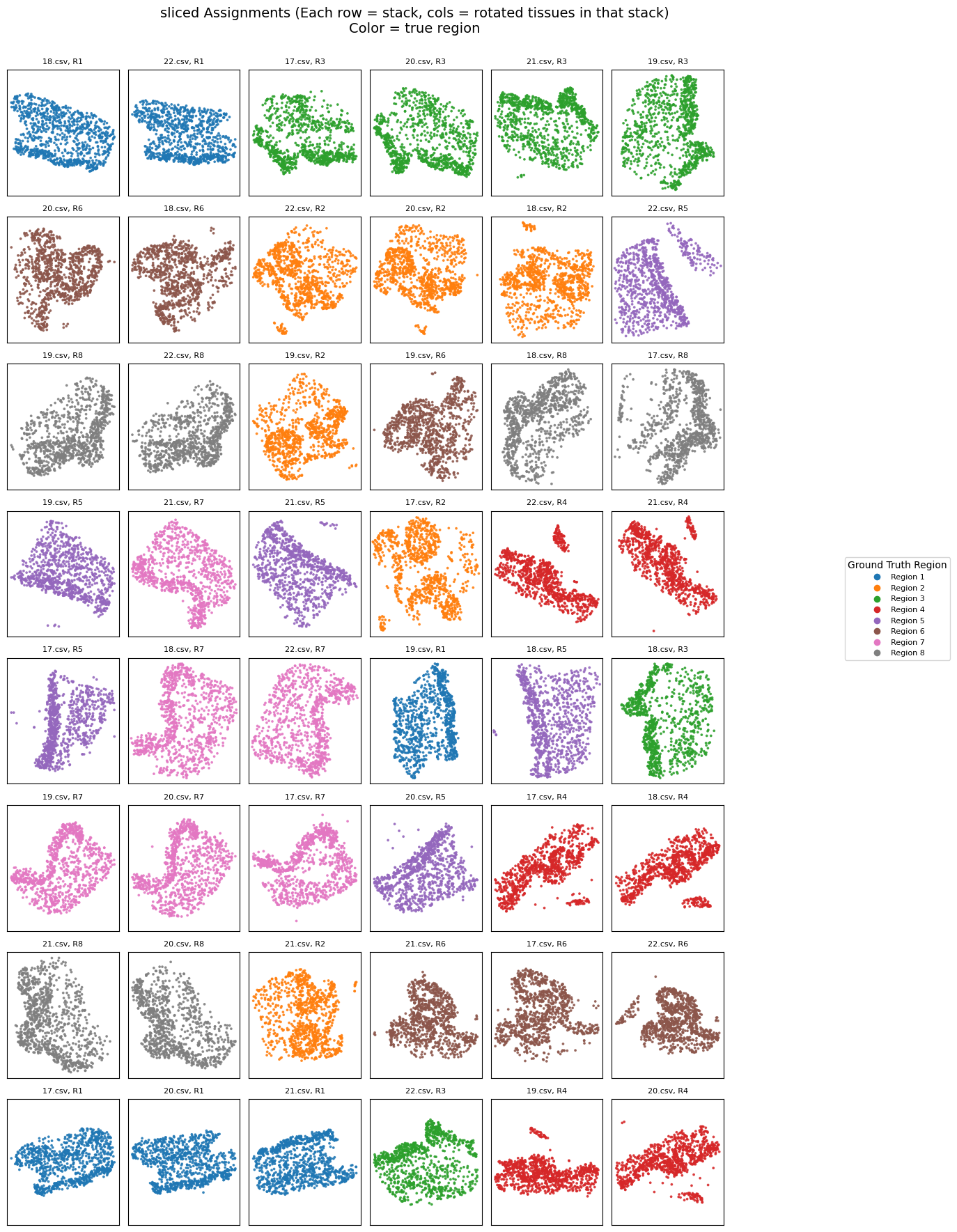}
    \label{fig:placeholder}
\end{figure}

\begin{figure}
    \centering
    \includegraphics[width=1\linewidth]{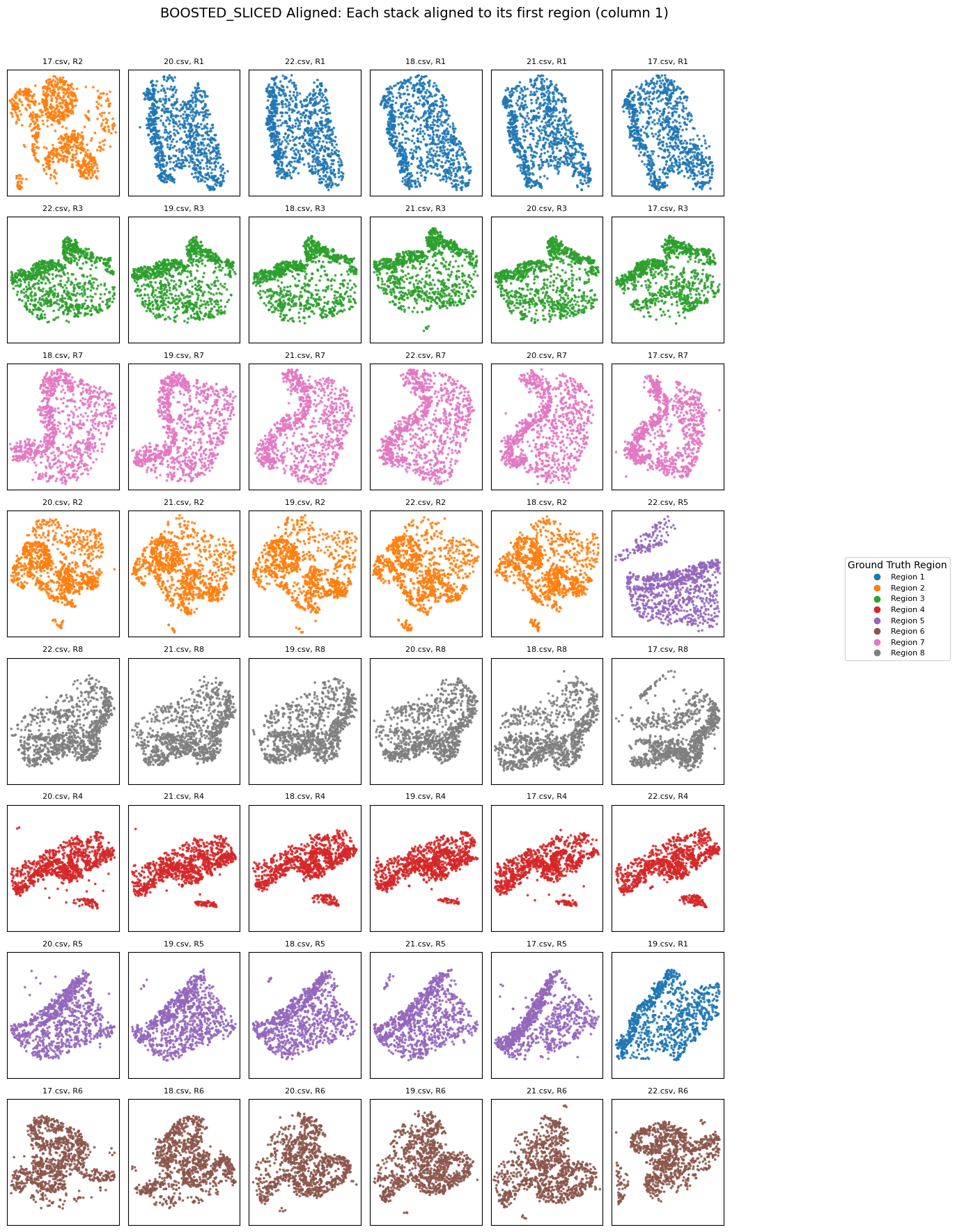}
    \label{fig:placeholder}
\end{figure}

\begin{figure}
    \centering
    \includegraphics[width=1\linewidth]{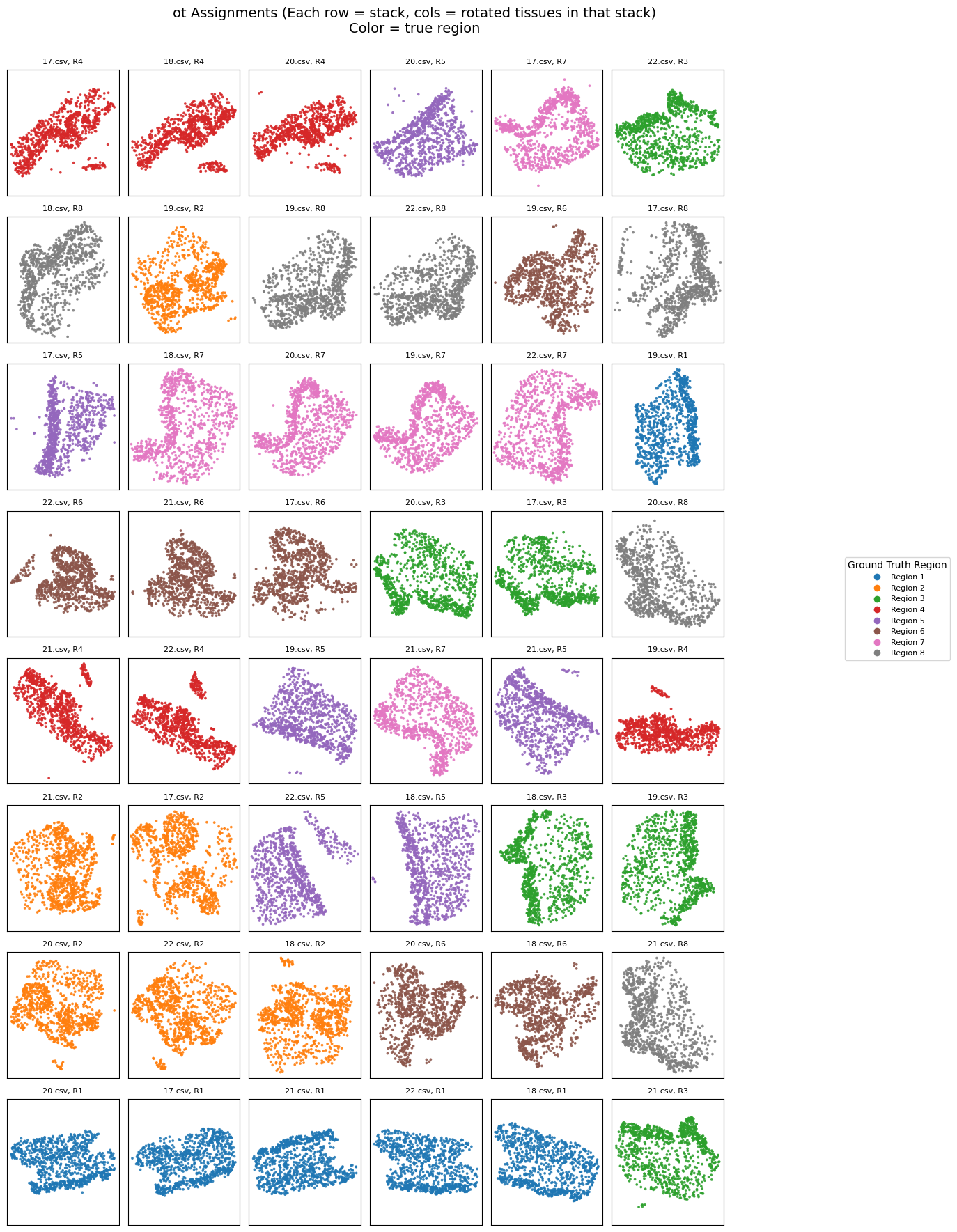}
    \label{fig:placeholder}
\end{figure}

\begin{figure}
    \centering
    \includegraphics[width=1\linewidth]{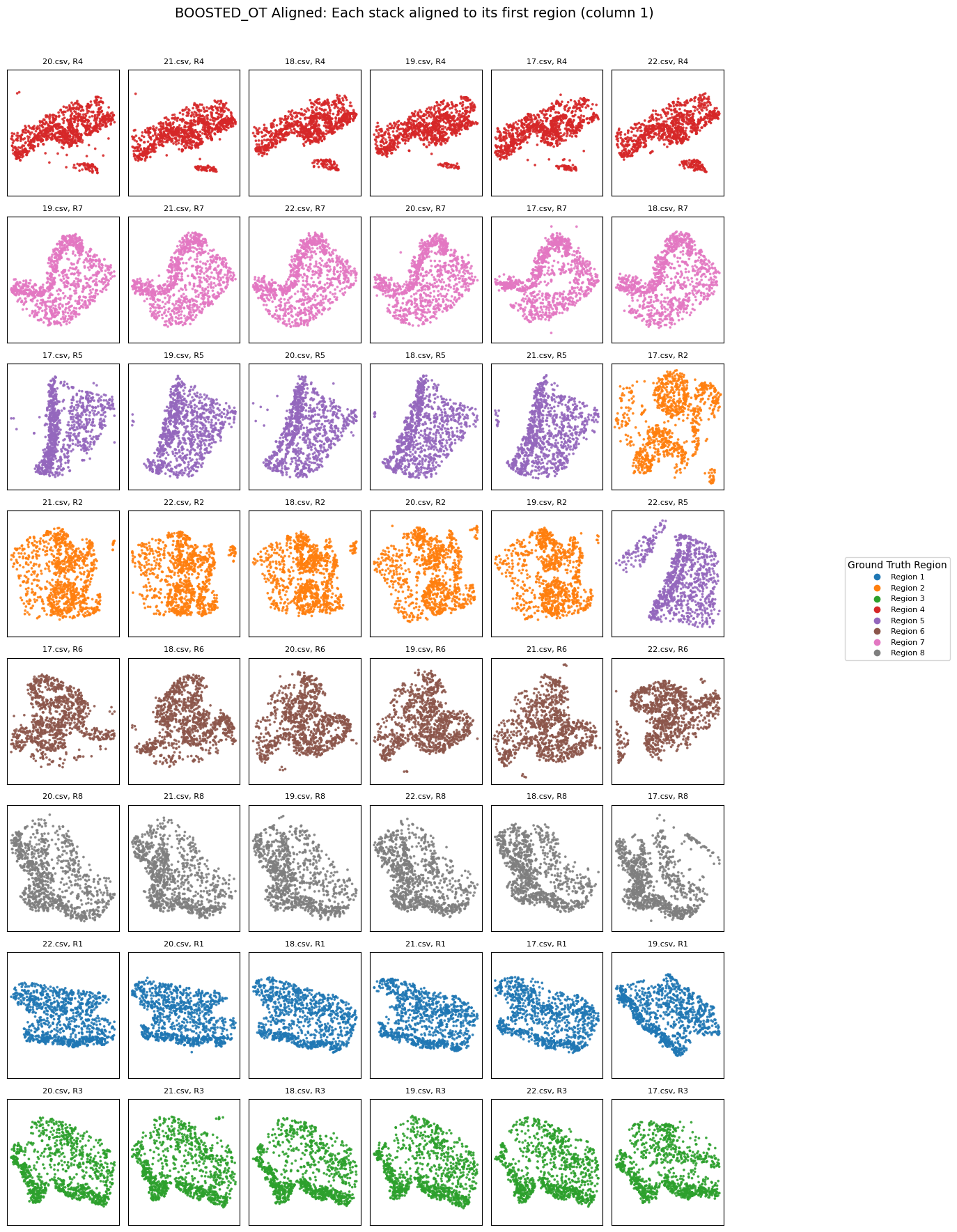}
    \label{fig:placeholder}
\end{figure}

\begin{figure}
    \centering
    \includegraphics[width=1\linewidth]{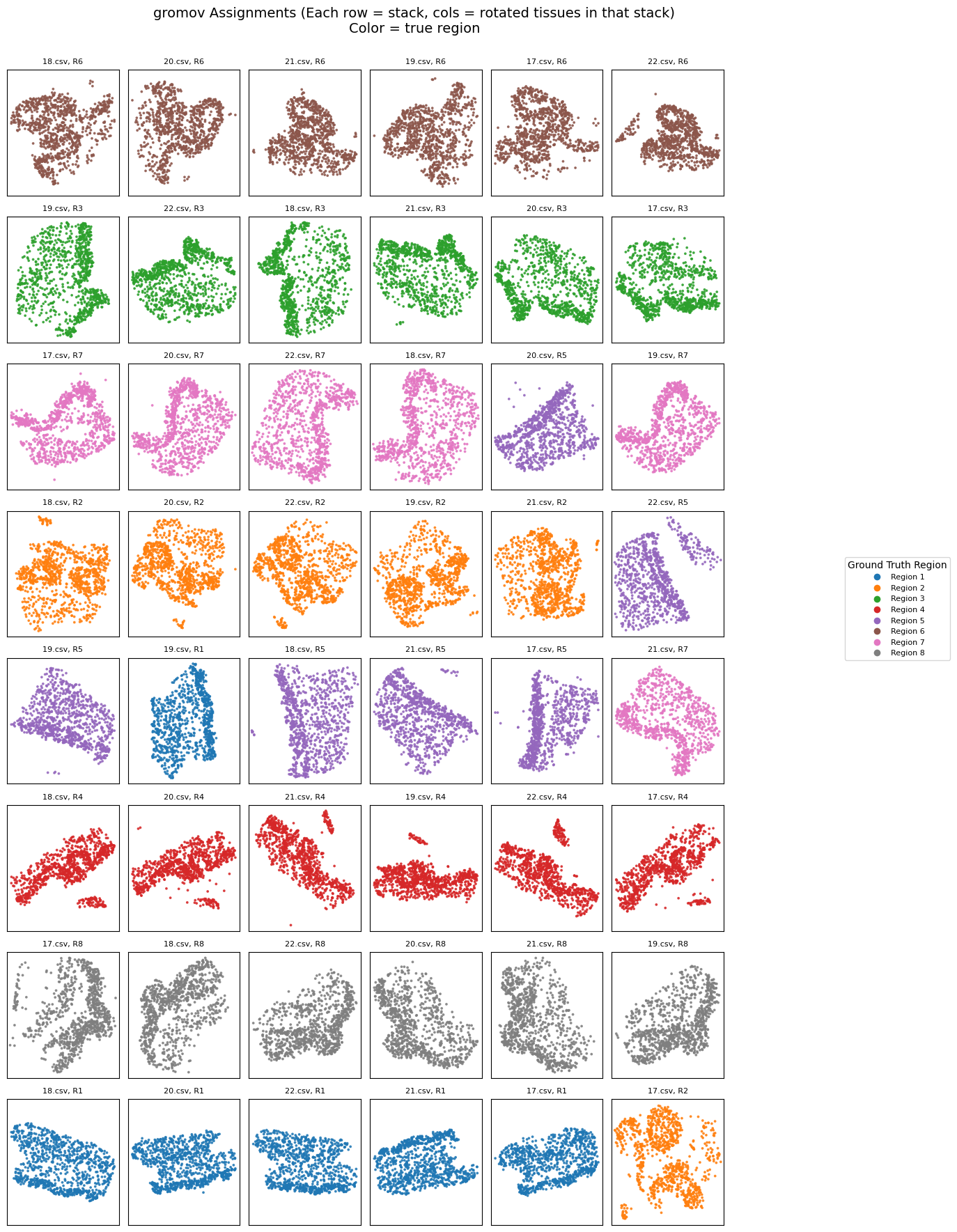}
    \label{fig:placeholder}
\end{figure}

\newpage

\subsubsection{Hybrid Spatial–Marker Distance and Stack Assignment}

To incorporate both spatial structure and marker expression in our region-level comparisons, and taking inspiration from \citet{vayer2019optimaltransportstructureddata}, we define a hybrid distance matrix that interpolates between them.

For each pair of regions, we compute two quantities.

\begin{itemize}
    \item A spatial distance using a selected geometric distance function (e.g., RISWIE, etc), applied to the cell coordinates within each region.
    \item A marker distance computed as the 2-Wasserstein distance between high-dimensional cell marker embeddings sampled from each region.
\end{itemize}

Let $D^\text{spatial}_{ij}$ and $D^\text{marker}_{ij}$ denote these pairwise dissimilarities, both scaled to [0, 1] via min-max normalization.

We then define $$D^\text{hybrid}_{ij} = \lambda \cdot D^\text{spatial}_{ij} + (1 - \lambda) \cdot D^\text{marker}_{ij},$$

where $\lambda \in [0, 1]$ is tunable.

We then use this hybrid distance matrix to perform stack assignment as before. Interestingly, $\lambda=0.5$ is able to recover perfect stack accuracy using RISWIE as the spatial distance, while $\lambda=1.0$ and $\lambda=0.0$ were unable to.

\begin{figure}
    \centering
    \includegraphics[width=1\linewidth]{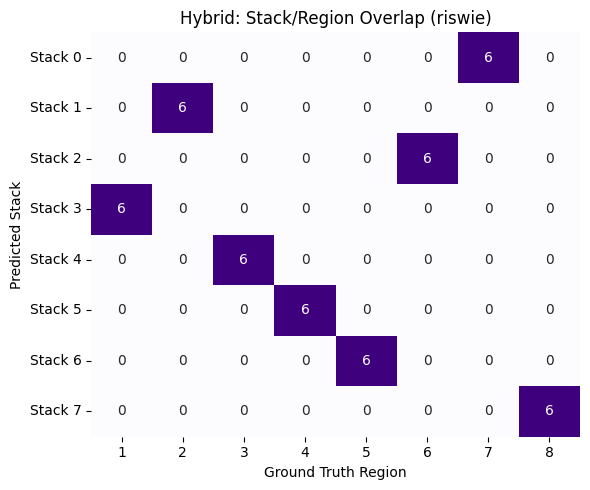}
    \caption{Hybrid Chosen Stack Assignment with RISWIE as the spatial distance and $\lambda=0.5$}
    \label{fig:placeholder}
\end{figure}

\begin{figure}
    \centering
    \includegraphics[width=1\linewidth]{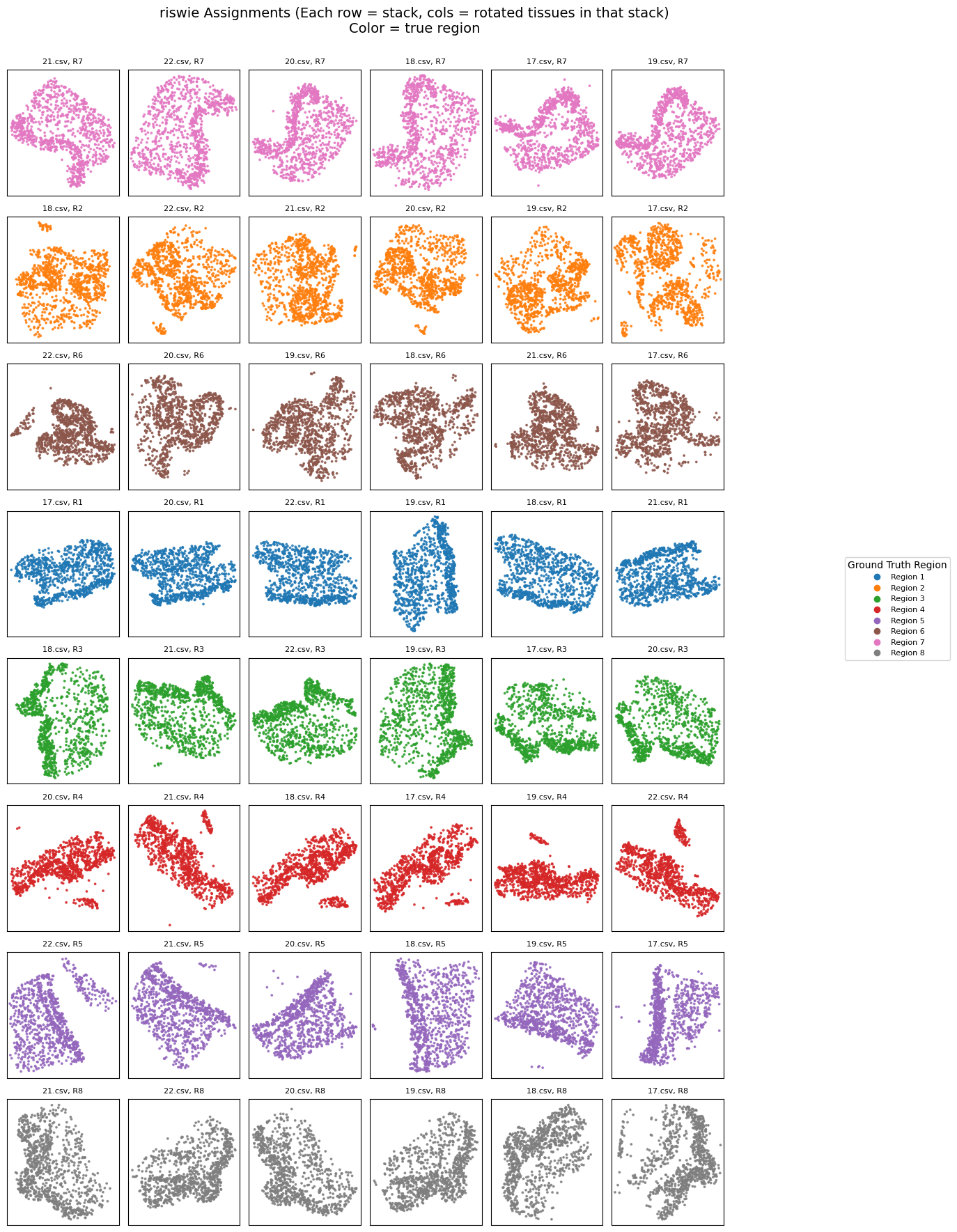}
    \caption{Unaligned Chosen Stack Assignment with RISWIE as the spatial distance and $\lambda=0.5$}
    \label{fig:placeholder}
\end{figure}

\subsection{Ordering Agreement Between RISWIE and Gromov--Wasserstein}

We also investigate how often the ordering induced by Gromov--Wasserstein aligns with that induced by RISWIE. Specifically, for the cell dataset, we compute the proportion of consistent orderings:
\[
\frac{
    \sum \mathbb{I} \left[ \operatorname{sign}(\mathrm{GW}(a,b) - \mathrm{GW}(c,d)) = \operatorname{sign}(D(a,b) - D(c,d)) \right]
}{
    \sum 1
}
\]
where the sum ranges over all unique pairs of upper-triangular (off-diagonal) entries in the pairwise distance matrix.

\medskip

\noindent
Gromov--Wasserstein and RISWIE agreed on the ordering of 87.4\% of all 635{,}628 region pair comparisons. The mean (median) absolute percentile difference between the two metrics was 0.091 (0.064).

\medskip

\noindent
When restricting to region pairs separated by at least one Gromov--Wasserstein standard deviation, the ordering agreement increased to 99.4\% (302{,}853 out of 304{,}720 pairs).

\medskip

\noindent
Note that we approximate Gromov--Wasserstein using the solver provided in the POT library \citep{flamary2021pot, flamary2024pot}. This does not guarantee exact agreement with the theoretical (NP-hard) Gromov--Wasserstein value.

\subsection{Near-Degenerate Eigenvalues}
\label{appendix:degenerate}
A natural concern for RISWIE is robustness when embedding coordinates are ambiguous, e.g., under (near-)degenerate spectra where eigenvectors may rotate within eigenspaces or swap order under small perturbations. RISWIE mitigates this by \emph{axis matching} (via 1D Wasserstein distances) rather than relying on a fixed coordinate ordering.

\paragraph{Degenerate eigenvalues experiment}
We generate ten independent 500-point samples from each of eight highly symmetric shapes (sphere, cube, rectangular prism, pyramid, cylinder, torus, ellipsoid, and an L-shape), all on the same scale, and add isotropic Gaussian noise with standard deviation $0.02$ independently to each point. We compute pairwise distance matrices and evaluate both pairwise discrimination (AUC) and clustering (V-measure). Table~\ref{tab:degenerate-eigs} shows that RISWIE outperforms rigid-invariant baselines (Procrustes--Wasserstein (PW) and we also compare to RISGW from \citet{vayer2022slicedgromovwasserstein}) across all downstream metrics. Additionally, on a reduced setting compatible with Gromov--Wasserstein, RISWIE again outperforms GW (Table~\ref{tab:degenerate-gw}).

\begin{table}[h]
\centering
\small
\setlength{\tabcolsep}{6pt}
\caption{Degenerate-eigenvalue supporting experiment (mean $\pm$ std over runs).}
\label{tab:degenerate-eigs}
\begin{tabular}{lccc}
\toprule
Metric & PW & RISGW & RISWIE \\
\midrule
Pairwise AUC (same vs different) & 0.870 $\pm$ 0.007 & 0.742 $\pm$ 0.007 & \textbf{0.898 $\pm$ 0.005} \\
KMeans on distance rows (V-measure) & 0.704 $\pm$ 0.024 & 0.540 $\pm$ 0.036 & \textbf{0.725 $\pm$ 0.023} \\
Spectral (RBF of distance, V-measure) & 0.599 $\pm$ 0.030 & 0.236 $\pm$ 0.052 & \textbf{0.751 $\pm$ 0.023} \\
Agglomerative (avg linkage, V-measure) & 0.655 $\pm$ 0.030 & 0.355 $\pm$ 0.024 & \textbf{0.666 $\pm$ 0.032} \\
\bottomrule
\end{tabular}
\end{table}

\begin{table}[h]
\centering
\small
\setlength{\tabcolsep}{6pt}
\caption{Compatibility run with fewer instances per shape to enable Gromov--Wasserstein (means shown).}
\label{tab:degenerate-gw}
\begin{tabular}{lcccc}
\toprule
Method & AUC & KMeans (V) & Spectral (V) & Agglomerative (V) \\
\midrule
Gromov--Wasserstein & 0.8347 & 0.6605 & 0.6411 & 0.5497 \\
RISWIE & \textbf{0.8960} & \textbf{0.7680} & \textbf{0.7900} & \textbf{0.6980} \\
\bottomrule
\end{tabular}
\end{table}

\paragraph{Near-degenerate eigenvalues and noise sensitivity.}
We next assess sensitivity to additional isotropic noise. Using the same setup, we report downstream performance both with no additional noise (beyond sampling noise) and with added Gaussian noise of standard deviation $0.05$ (Table~\ref{tab:near-degenerate-downstream}). RISWIE is largely unchanged, while PW exhibits larger shifts on several clustering metrics. We also directly compare distance-matrix stability by correlating matrix entries between the no-noise and noisy conditions, and reporting mean relative error (MRE) (Table~\ref{tab:near-degenerate-stability}). Across noise levels, RISWIE-PCA is not more sensitive than Procrustes--Wasserstein.

\begin{table}[h]
\centering
\small
\setlength{\tabcolsep}{6pt}
\caption{Near-degenerate experiment: downstream metrics (mean $\pm$ std). Top: no extra noise; bottom: added isotropic noise with $\sigma=0.05$.}
\label{tab:near-degenerate-downstream}
\begin{tabular}{lccc}
\toprule
Metric & PW & RISGW & RISWIE \\
\midrule
AUC & 0.856 $\pm$ 0.009 & 0.770 $\pm$ 0.005 & \textbf{0.896 $\pm$ 0.010} \\
KMeans V & 0.615 $\pm$ 0.037 & 0.520 $\pm$ 0.040 & \textbf{0.768 $\pm$ 0.028} \\
Spectral V & 0.611 $\pm$ 0.067 & 0.349 $\pm$ 0.055 & \textbf{0.730 $\pm$ 0.036} \\
Agglo V & 0.660 $\pm$ 0.039 & 0.467 $\pm$ 0.059 & \textbf{0.698 $\pm$ 0.027} \\
\midrule
AUC ($\sigma{=}0.05$) & 0.857 $\pm$ 0.008 & 0.770 $\pm$ 0.005 & \textbf{0.895 $\pm$ 0.010} \\
KMeans V ($\sigma{=}0.05$) & 0.674 $\pm$ 0.037 & 0.584 $\pm$ 0.030 & \textbf{0.768 $\pm$ 0.028} \\
Spectral V ($\sigma{=}0.05$) & 0.632 $\pm$ 0.020 & 0.348 $\pm$ 0.056 & \textbf{0.789 $\pm$ 0.033} \\
Agglo V ($\sigma{=}0.05$) & 0.637 $\pm$ 0.036 & 0.467 $\pm$ 0.059 & \textbf{0.696 $\pm$ 0.032} \\
\bottomrule
\end{tabular}
\end{table}

\begin{table}[h]
\centering
\small
\setlength{\tabcolsep}{7pt}
\caption{Distance-matrix stability under added isotropic noise: correlation of distance-matrix entries and mean relative error (MRE) comparing no-noise vs noisy shapes.}
\label{tab:near-degenerate-stability}
\begin{tabular}{lcccc}
\toprule
Noise $\sigma$ & RISWIE-PCA corr & RISWIE-PCA MRE & Procrustes corr & Procrustes MRE \\
\midrule
0.05 & 0.995 & 0.053 & 0.996 & 0.039 \\
0.10 & 0.985 & 0.096 & 0.984 & 0.077 \\
0.20 & 0.958 & 0.172 & 0.940 & 0.159 \\
0.30 & 0.923 & 0.228 & 0.876 & 0.255 \\
\bottomrule
\end{tabular}
\end{table}

\subsection{RISWIE Variants}
\label{appendix:variants}
To facilitate differentiable optimization, we define a soft relaxation of RISWIE, denoted SRISWIE, which replaces hard axis matching with entropic transport over a soft cost matrix. This provides a continuous approximation that is always rigid invariant and converges to RISWIE in the limit as $\beta \to \infty$ and $\varepsilon \to 0$ (the parameters $\beta$ and $\varepsilon$ control distinct relaxations). 

\begin{definition}[Soft RISWIE (SRISWIE) Distance]
Let $\mu, \nu$ be centered probability measures in $\mathcal{P}_2(\mathbb{R}^d)$, and again let $\varphi = (\varphi_1, \dots, \varphi_k)$, $\psi = (\psi_1, \dots, \psi_k)$ be fixed embedding functions.

For each $(j, m) \in \{1, \dots, k\}^2$, define
\[
C_{jm}^+ := W_2^2((\varphi_j)_\# \mu,\; (\psi_m)_\# \nu), \quad
C_{jm}^- := W_2^2((\varphi_j)_\# \mu,\; (-\psi_m)_\# \nu)
\]
and set the cost of a pairing as:
\[
\tilde{C}_{jm} := w_{jm} C_{jm}^+ + (1 - w_{jm}) C_{jm}^-, \quad\text{where}\quad
w_{jm} := \frac{1}{1 + \exp\big(\beta (C_{jm}^+ - C_{jm}^-)\big)}.
\]

Let $\tilde{C} \in \mathbb{R}^{k \times k}$ be the resulting soft cost matrix. Define the SRISWIE distance as:
\[
\operatorname{SRISWIE}^2(\mu, \nu; \varepsilon, \beta)
= \min_{\mathbf{P} \in \mathcal{U}_k}
\left\{
\frac{1}{k} \sum_{j=1}^k \sum_{m=1}^k \mathbf{P}_{jm} \tilde{C}_{jm}
+ \varepsilon \sum_{j=1}^k \sum_{m=1}^k \mathbf{P}_{jm} \log \mathbf{P}_{jm}
\right\}
\]
where $\mathcal{U}_k$ denotes the set of $k \times k$ doubly stochastic matrices.
\end{definition}

This variant replaces the hard signed-permutation matching over $O_k^\pm$ with an entropic optimal transport problem and handles axis reflections with a smooth soft-min. Geometrically, it induces a probability distribution over matchings between one-dimensional subspaces, corresponding to points on the Grassmannian $\mathrm{Gr}(1,d)$. This formulation is easy to solve with Sinkhorn updates (analogously to entropically regularized Wasserstein distance).

We leave performance of SRISWIE on more sophisticated deep learning tasks for future work. On the FAUST dataset clustering task, SRISWIE was able to compute a $100 \times 100$ distance matrix between meshes with the full 6890 points in 34 seconds. Downstream spectral clustering with each mesh embedded as the corresponding row/column of the distance matrix yielded a V-measure of 0.8541.  

We can also extract the optimal axis pairing and optimal relative sign for each axis pairing from RISWIE to align shapes before computing other distances such as Wasserstein or Sliced Wasserstein. We call these distances Boosted Optimal Transport and Boosted Sliced Wasserstein, respectively. See Section~\ref{sec:cells-experiment} for comparisons of how these boosted distances perform in solving the balanced partitioning problem.

\subsection{Theorems and Proofs}
Throughout the appendix, we will denote the RISWIE distance by $D$ unless stated otherwise (such as for the Gaussian closed form, denoted $D_G$).
 
\begin{proof}[Proof of Theorem \ref{thm:rigid_invariance}]

RISWIE is defined on centered embeddings (the means are subtracted), so translation $t$ has no effect on the pushforwards; we may assume $t=0$ w.l.o.g.

\paragraph{PCA: }

Let $\Sigma_\mu = U \Lambda U^\top$ be the eigendecomposition of the covariance where $\Lambda = \mathrm{diag}(\lambda_1,\dots,\lambda_d)$ and the eigenvalues are ordered $\lambda_1 > \cdots > \lambda_r > 0 = \lambda_{r+1} = \cdots = \lambda_d$

Applying $T(x) = R x + t$, the covariance of $T_{\#}\mu$ is $$\Sigma_{T_{\#}\mu} = R \Sigma_\mu R^\top = (RU)\Lambda (RU)^\top$$

Seen on an individual eigenvector level,

$$\Sigma_\mu u=\lambda u\ \Longrightarrow\ \Sigma_{T_\#\mu}(Ru)=R\Sigma_\mu R^\top (Ru)=R(\Sigma_\mu u)=\lambda(Ru),$$

Thus, the eigenvalues of $\Sigma_{T_{\#}\mu}$ are equal to those of $\Sigma_\mu$ and its eigenvectors are interpreted as orthogonally transformed versions of those of $\mu$. For the eigenvectors corresponding to the non-zero eigenvalues, the transformation is unique up to sign. The two covariance matrices have the same distribution of eigenvalues (unique non-zero eigenvalues, some number of zero eigenvalues), so the only ambiguity in finding a non-zero eigenvalue eigenvector is the sign. For the zero-eigenvalue eigenvectors, which may have multiplicity, there is more to say. 

For the zero-eigenvalue eigenspace, any orthonormal basis spans the kernel. Projections of $\mu$ onto any direction in this subspace yield Dirac masses at zero. Although there is some ambiguity in choosing them, we only use these eigenvectors to induce distributions on the real line, so the end effect is the same. Also, the sign ambiguity doesn't matter either (reflection of a Dirac mass at zero is still a Dirac mass at 0).

For the non-zero eigenvalue eigenvectors, the projection of rotated data onto rotated eigenvectors induces the same distribution. That is, 

$$\text{for all }x\in\mathbb R^d:\quad \langle Rx,\,Ru\rangle=\langle x,\,u\rangle,
\ \ \text{so for any sample } \{x_i\},\ \{\langle Rx_i,\,Ru\rangle\}_i=\{\langle x_i,\,u\rangle\}_i$$

This assumes that we chose the optimal relative sign difference, because otherwise one of these multisets is reflected across 0. The element in the cost matrix for this pairing removes the ambiguity regarding the sign and recovers the correct relative sign between them. That is, for projections onto non-zero eigenvalue eigenvectors, we knew the induced distributions were unique up to sign, and $s$ handles the relative difference in sign.

$$c(\pm u,\pm Ru) = \min_{s \in \{\pm 1\}} W_2^2\left( \left\{ \langle x_i, u \rangle \right\}_{i=1}^n,\ \left\{ s \langle R x_i, R u \rangle \right\}_{i=1}^n \right)$$

Notationally, what we are illustrating is that there is sign ambiguity in how each axis is obtained from PCA (up to sign), but regardless of that, the cost matrix entry will be the same.

$W_2$ is a metric, so $W_2^2$ is 0 if and only if the two multisets are equal. Thus, for one of these two terms in the minimization, $W_2^2$ will be 0. This is because Wasserstein is invariant under simultaneous reflection, so we only need to consider two cases instead of four.

As stated earlier, the zero eigenvalues all yield Dirac masses at 0, and the cost matrix entry between them will be 0. 

Thus, if $\pi(i)$ is defined to pair axes with the same eigenvalue to axes of the same eigenvalue, each $c_{i, \pi(i)}$ will be 0. This is feasible because they have the same eigenvalue distribution. This can be done uniquely for the top $r$ eigenvectors, and in any such way for the remaining indices $r+1,...d$. The end result is that identical (up to sign) multisets are paired together, and scored as 0 cost, and any Diracs are paired together for 0 cost.

$$c_{i,\pi(i)}
=\min_{s\in\{\pm1\}} W_2^2\!\Big(\{\langle x_j,u_i\rangle\}_j,\ \{ s\,\langle R x_j, v_{\pi(i)}\rangle\}_j\Big)
=0,$$

Thus, D$^2(\mu,T_{\#}\mu)=0 \implies \text{D}(\mu,T_{\#}\mu)=0  $ as $$\text{D}^2 \leq\frac{1}{k}\sum_{j=1}^k c\big(u_j,\; v_{\pi(j)}\big) = 0$$ as we constructed one such signed permutation that is minimized over and RISWIE is non-negative.

 Note that we can take only the top $k$ eigenvectors (truncated SVD) and still obtain rigid-invariance by defining the same bijection $\pi$ but truncating the two sets of eigenvectors, keeping only the top $k$ by eigenvalue in each. This will also result in a RISWIE distance of 0. 

We have directly shown the special case that when two distributions differ by a rigid transformation that their distance is 0. It is a simple generalization to show that arbitrary rigid transformations applied to one of two different distributions do not change the RISWIE distance. 

That is, for two measures $\mu,\nu$ (still making simple non-zero covariance eigenvalue assumptions), any for any rigid maps $T(x)=Rx,\;S(y)=Qy$, $$D(\mu,\nu)\;=\;D(T_\#\mu,\nu)\;=\;D(\mu,S_\#\nu)\;=\;D(T_\#\mu,S_\#\nu)$$ 

This is because the RISWIE distance is just a function of the 1D marginals. The 1D marginals are actually the same up to sign for the same distribution before and after a rigid transformation. Thus, when we do axis-pairing, it doesn't matter whether a distribution was rigidly transformed or not. RISWIE will optimize over signs and remove that ambiguity.

\paragraph{Diffusion Maps: }

Define the kernel $$K_{ij}\;=\;k\!\left(\frac{\|x_i-x_j\|^2}{\varepsilon}\right)\qquad
(\text{e.g. }k(s)=e^{-s})$$

Rigid transformations preserve pairwise distances

$$\|T(x_i)-T(x_j)\|=\|Rx_i+ t - (Rx_j+t)\|=\|R(x_i-x_j)\|=\|x_i-x_j\|$$

Consequently, the construction of the kernel matrix itself is rigid-invariant. If we called the kernel matrix $K'$ (build from $\{T(x_i)\}$), then $K'=K$.

As such, given that the entire diffusion procedure (writing the degree matrix $E$, Laplacian $L$, EVD, etc) is entirely derived from the kernel matrix, the embedded distributions should be exactly the same.

$$E'=\mathrm{diag}(K\mathbf 1)=E,\qquad
L'_{\mathrm{rw}} = E^{-1}K=L_{\mathrm{rw}},\quad
L'_{\mathrm{sym}}=I-E^{-1/2}KE^{-1/2}=L_{\mathrm{sym}}.$$

Let $L_{sym}\Phi=\Phi\Lambda$ be an be an eigendecomposition. 

Point $i$ is embedded with diffusion coordinates

$$\Psi_t(i)=\big(\lambda_1^t\,\phi_1(i),\ldots,\lambda_k^t\,\phi_k(i)\big)^\top$$

for some fixed time $t$.

Given that the construction of $L_{sym}$ is rigid-invariant, the eigenvectors returned by an eigensolver for $L_{sym}$ and $L'_{sym}$ should be the same. Whether this is true in practice depends on the implementation of numerical eigensolvers. It would suffice to assume a simple spectrum, which would ensure that the eigenvectors are unique up to sign, but it is not necessary. As such, we only assume that the eigensolver used is deterministic. 

Thus, following the same argument as for PCA, if the $k$ 1D distributions are the same whether or not a rigid transformation is applied to the distribution, then the RISWIE distance between any two shapes does not depend on arbitrary rigid transformations applied to them. So $D(\mu,\nu) = D(T_\#\mu,S_\#\nu)$ where diffusion map embeddings in $D$ are implicitly used as well.







\end{proof}
\begin{proof}[Proof of Theorem \ref{them:pseudometric}]
Let $\mathcal{E}$ be any deterministic $k$-dimensional embedding procedure. Then for any $X, Y, Z \in \mathcal{P}_2(\mathbb{R}^d)$, the RISWIE distance satisfies:
\begin{enumerate}
    \item[(i)] Non-negativity: $D(X, Y) \geq 0$,
    \item[(ii)] \text{Symmetry:} $D(X, Y) = D(Y, X)$,
    \item[(iii)] \text{Triangle inequality:} $D(X, Z) \leq D(X, Y) + D(Y, Z)$,
\end{enumerate}

The square root of the average of $W_2^2$ distances is non-negative and symmetric. 

\[
\text{Let  }R_{XY} \;=\;argmin_{R\in\mathcal{O}_k^\pm}\frac1k\sum_{j=1}^kW_2^2\bigl(\alpha_j,\beta_{Rj}\bigr),
\quad
R_{YZ} \;=\;argmin_{R\in\mathcal{O}_k^\pm}\frac1k\sum_{j=1}^kW_2^2\bigl(\beta_j,\gamma_{Rj}\bigr).
\]
Define the composite signed permutation $
R_{XZ} \;=\;R_{YZ}\,R_{XY}\;\in\;\mathcal{O}_k^\pm.
$
For each \(j\), let
\[
u_j = W_2\bigl(\alpha_j,\beta_{R_{XY}j}\bigr),\quad
v_j = W_2\bigl(\beta_{R_{XY}j},\gamma_{R_{XZ}j}\bigr),\quad
w_j = W_2\bigl(\alpha_j,\gamma_{R_{XZ}j}\bigr).
\]
By the one-dimensional triangle inequality,
\[
w_j
\;=\;
W_2\bigl(\alpha_j,\gamma_{R_{XZ}j}\bigr)
\;\le\;
W_2\bigl(\alpha_j,\beta_{R_{XY}j}\bigr)
\;+\;
W_2\bigl(\beta_{R_{XY}j},\gamma_{R_{XZ}j}\bigr)
\;=\;
u_j+v_j.
\]
Hence componentwise \(w\le u+v\), so
\[
\|w\|_2\;\le\;\|u+v\|_2\;\le\;\|u\|_2+\|v\|_2,
\]
and dividing by \(\sqrt{k}\) gives
\[
\sqrt{\frac1k\sum_{j=1}^kw_j^2}
\;\le\;
\sqrt{\frac1k\sum_{j=1}^ku_j^2}
\;+\;
\sqrt{\frac1k\sum_{j=1}^kv_j^2}.
\]
Since \(R_{XZ}\) is only a candidate for the minimization defining \(D(X,Z)\),
\[
\mathrm{D}(X,Z)
\;=\;
\min_{R\in\mathcal{O}_k^\pm}
\sqrt{\frac1k\sum_{j=1}^kW_2^2(\alpha_j,\gamma_{Rj})}
\;\le\;
\sqrt{\frac1k\sum_{j=1}^kw_j^2}
\;\le\;
\mathrm{D}(X, Y)
\;+\;
\mathrm{D}(Y,Z).
\]

\end{proof}



\begin{proof}[Proof of Theorem \ref{them:RISWIE_PCA}]
Without loss of generality, consider the centered versions $A \sim \mathcal{N}(0, \Sigma_A)$ and $B \sim \mathcal{N}(0, \Sigma_B)$, as RISWIE is translation-invariant.

Projecting $A \sim \mathcal{N}(0, \Sigma_A)$ onto its $i$th PCA axis $u_i$ yields a one-dimensional Gaussian, since $u_i^\top x \sim \mathcal{N}(0, \lambda_i^A)$. Similarly, projecting $B \sim \mathcal{N}(0, \Sigma_B)$ onto its $j$th PCA axis $v_j$ yields, with $v_j^\top y \sim \mathcal{N}(0, \lambda_j^B)$. Take $$a_i := \sqrt{\lambda_i^A}$$ and $b_j := \sqrt{\lambda_j^B}$. It is known that the squared Wasserstein-2 distance between $\mathcal{N}(0, \lambda_i^A)$ and $\mathcal{N}(0, \lambda_j^B)$ is $(a_i - b_j)^2$.

Thus, the RISWIE cost for a permutation $\pi \in S_d$ is
\[
C(\pi) := \frac{1}{d} \sum_{i=1}^d (a_i - b_{\pi(i)})^2.
\]
We claim this is minimized when both vectors are sorted in increasing order (i.e., $\pi^* = \mathrm{id}$). Note that $a_1\le \cdots \le a_d$ (the $a_i$ are sorted). 

Indeed, consider swapping two positions, say $i < j$, and compare the change in costs between the two permutations:
\begin{align*}
    &\Delta := \Big[(a_i - b_j)^2 + (a_j - b_i)^2\Big] - \Big[(a_i - b_i)^2 + (a_j - b_j)^2\Big] \\
    &= \big[a_i^2 - 2a_i b_j + b_j^2 + a_j^2 - 2a_j b_i + b_i^2\big] - \big[a_i^2 - 2a_i b_i + b_i^2 + a_j^2 - 2a_j b_j + b_j^2\big] \\
    &= \big[-2a_i b_j + b_j^2 - 2a_j b_i + b_i^2\big] - \big[-2a_i b_i + b_i^2 - 2a_j b_j + b_j^2\big] \\
    &= -2a_i b_j + b_j^2 - 2a_j b_i + b_i^2 + 2a_i b_i - b_i^2 + 2a_j b_j - b_j^2 \\
    &= 2a_i(b_i - b_j) + 2a_j(b_j - b_i) \\
    &= 2(a_j - a_i)(b_j - b_i).
\end{align*}
If $b_j < b_i$ (an inversion relative to the $a-$order, then $b_j-b_i < 0$ and hence $\Delta \leq 0$. So swapping $b_i$, $b_j$ for the increasing sorted order does not increase the cost, and strictly decreases it unless $a_i = a_j$.

Thus, given any permutation, it can be improved by swapping inverted adjacent pairs. The only time we can't improve a solution is there are no inversions, i.e. when

$$b_{\pi(1)}\le b_{\pi(2)}\le \cdots \le b_{\pi(d)}$$

Since any permutation can be reduced to the identity via a sequence of such swaps, and each swap never increases the cost, the minimal cost is achieved by the identity permutation:
\[
C(\mathrm{id}) = \frac{1}{d} \sum_{i=1}^d (a_i - b_i)^2.
\]
Therefore,
\[
\mathrm{D}_G^2(A, B) = \frac{1}{d} \|\mathbf{a} - \mathbf{b}\|_2^2,
\]
as claimed. Here, we denote $D_G$ to be the Gaussian closed form.
\end{proof}

\begin{proof}[Proof of Theorem \ref{thm:comparison}]
We use the bounds from \citep{salmona2022}:

\paragraph{Upper Bounding RISWIE:} 
\[
LGW_2^2(A,B) = 4(\operatorname{tr}(\Lambda_A) - \operatorname{tr}(\Lambda_B))^2 + 4(\|\Lambda_A\|_F - \|\Lambda_B\|_F)^2 + 4\|\Lambda_A - \Lambda_B\|_F^2,
\]
\[
GGW_2^2(A,B) = 4(\operatorname{tr}(\Lambda_A) - \operatorname{tr}(\Lambda_B))^2 + 8\|\Lambda_A - \Lambda_B\|_F^2 + 8(\|\Lambda_A\|_F^2 - \|\Lambda_A^{(n)}\|_F^2).
\]

Here, $LGW$ and $GGW$ are lower and upper bounds for $GW_2^2$. The results from \citet{salmona2022} are general and apply to Gaussian measures defined on Euclidean spaces of differing dimensions. For clarity and interpretability, however, we focus on the case where both distributions lie in the same ambient space. As such, we have already dropped an additional term from the original formulation, which accounted for the difference in Frobenius norm between the full covariance eigenvalue matrix and its truncation to the lower-dimensional space. This term vanishes in our setting since both distributions lie in the same ambient space, and no truncation is required.

Let \( a_i = \sqrt{\lambda_i^A} \), \( b_i = \sqrt{\lambda_i^B} \), and \( \alpha = \min_i (a_i + b_i) \).
Note that \( (\lambda_i^A - \lambda_i^B)^2 = (a_i+b_i)^2 (a_i-b_i)^2 \geq \alpha^2 (a_i-b_i)^2 \) for all \( i \).

Therefore,
\[
\|\Lambda_A - \Lambda_B\|_F^2 = \sum_{i=1}^d (\lambda_i^A - \lambda_i^B)^2 \geq \alpha^2 \sum_{i=1}^d (a_i-b_i)^2 = d \alpha^2 D_G^2(A,B).
\]

Since all other terms in $LGW_2^2$ are nonnegative,
\[
LGW_2^2(A,B) \geq 4 \|\Lambda_A - \Lambda_B\|_F^2 \geq 4d\alpha^2 D_G^2(A,B).
\]
Similarly,
\[
GGW_2^2(A,B) \geq 8 \|\Lambda_A - \Lambda_B\|_F^2 \geq 8d\alpha^2 D_G^2(A,B).
\]
Hence,
\[
D_G^2(A,B) \leq \frac{GGW_2^2(A,B)}{8d\alpha^2}.
\]

Additionally, \citet{salmona2022} shows a bound on the difference between the upper and lower bounds:
\[
GGW_2^2(A,B) - LGW_2^2(A,B) \leq 8\|\Sigma_A\|_F \|\Sigma_B\|_F \left(1-\frac{1}{\sqrt{d}}\right).
\]
Because $GW_2^2(A,B) \leq GGW_2^2(A,B)$, and $LGW_2^2(A,B) \leq GW_2^2(A,B)$, we may write
\begin{align*}
GGW_2^2(A,B) &= GW_2^2(A,B) + (GGW_2^2(A,B) - GW_2^2(A,B)) \\
&\leq GW_2^2(A,B) + (GGW_2^2(A,B) - LGW_2^2(A,B)).
\end{align*}

Plugging this into the previous bound,
\begin{align*}
\mathrm{D}_G^2(A,B) 
&\leq \frac{GW_2^2(A,B)}{8d\alpha^2} + \frac{GGW_2^2(A,B) - LGW_2^2(A,B)}{8d\alpha^2} \\
&\leq \frac{GW_2^2(A,B)}{8d\alpha^2} + \frac{\|\Sigma_A\|_F \|\Sigma_B\|_F}{d\alpha^2} \left(1-\frac{1}{\sqrt{d}}\right).
\end{align*}

For the second bound, note that for all $i$,
\[
(a_i-b_i)^2 = \left(\sqrt{\lambda_i^A} - \sqrt{\lambda_i^B}\right)^2 \leq \left|\lambda_i^A - \lambda_i^B\right|,
\]
since by the factorization $a_i^2 - b_i^2 = (a_i - b_i)(a_i + b_i)$ and the triangle inequality,
\[
|a_i - b_i| \leq |a_i + b_i| \implies (a_i - b_i)^2 \leq |a_i^2 - b_i^2| = |\lambda_i^A - \lambda_i^B|.
\]

We note that $|a_i - b_i| \leq |a_i + b_i|$ holds specifically because we are dealing with non-negative $a_i$ and $b_i$.

Thus,
\[
\mathrm{D}_G^2(A,B) = \frac{1}{d} \sum_{i=1}^d (a_i-b_i)^2 \leq \frac{1}{d} \sum_{i=1}^d |\lambda_i^A-\lambda_i^B|.
\]

By Cauchy–Schwarz,
\[
\sum_{i=1}^d |\lambda_i^A-\lambda_i^B| \leq \sqrt{d} \left( \sum_{i=1}^d (\lambda_i^A-\lambda_i^B)^2 \right)^{1/2} = \sqrt{d} \|\Lambda_A-\Lambda_B\|_F.
\]
Thus,
\[
\mathrm{D}_G^2(A,B) \leq \frac{1}{\sqrt{d}} \|\Lambda_A-\Lambda_B\|_F.
\]
But $GW_2^2(A,B) \geq 4\|\Lambda_A-\Lambda_B\|_F^2 + 4(\operatorname{tr}(\Lambda_A) - \operatorname{tr}(\Lambda_B))^2 + 4(\|\Lambda_A\|_F - \|\Lambda_B\|_F)^2$,
so
\[
\|\Lambda_A-\Lambda_B\|_F^2 \leq \frac{1}{4} \left( GW_2^2(A,B) - 4(\operatorname{tr}(\Lambda_A) - \operatorname{tr}(\Lambda_B))^2 - 4(\|\Lambda_A\|_F - \|\Lambda_B\|_F)^2 \right).
\]
Therefore,
\[
\|\Lambda_A-\Lambda_B\|_F \leq \frac{1}{2}
\sqrt{
GW_2^2(A,B) - 4(\operatorname{tr}(\Lambda_A) - \operatorname{tr}(\Lambda_B))^2 - 4(\|\Lambda_A\|_F - \|\Lambda_B\|_F)^2
}.
\]
Putting this together,
\[
\mathrm{D}_G^2(A,B) \leq
\frac{1}{2\sqrt{d}}\,
\sqrt{
GW_2^2(A,B)
- 4\, \big(\operatorname{tr}(\Lambda_A) - \operatorname{tr}(\Lambda_B)\big)^2
- 4\, \big(\|\Lambda_A\|_F - \|\Lambda_B\|_F\big)^2
}.
\]

\paragraph{Lower Bounding RISWIE:}

\title{Gaussian GW lower bound}

We define \[
\beta := \max_{1\le i\le d} (a_i + b_i).
\]

Again from \citet{salmona2022}, as we work in the case where the distributions are in the same dimensional space:

\[
GGW_2^2(A,B) 
= 4\bigl(\mathrm{tr}(\Lambda_A)-\mathrm{tr}(\Lambda_B)\bigr)^2
 +8\|\Lambda_A-\Lambda_B\|_F^2
\]

For each $i$,
\[
\lambda^A_i - \lambda^B_i 
= a_i^2 - b_i^2
= (a_i - b_i)(a_i + b_i),
\]
so
\[
|a_i - b_i| 
= \frac{|\lambda^A_i - \lambda^B_i|}{a_i + b_i}
\;\ge\;
\frac{|\lambda^A_i - \lambda^B_i|}{\beta},
\]
because $a_i + b_i \le \beta$ by definition of $\beta$. Squaring and summing,
\[
\sum_{i=1}^d (a_i - b_i)^2
\;\ge\;
\frac1{\beta^2}\sum_{i=1}^d (\lambda^A_i - \lambda^B_i)^2
= \frac1{\beta^2}\,\|\Lambda_A - \Lambda_B\|_F^2.
\]
Hence, by Theorem \ref{them:RISWIE_PCA},
\[
D_G^2(A,B)
= \frac1d\sum_{i=1}^d (a_i - b_i)^2
\;\ge\;
\frac1{d\,\beta^2}\,\|\Lambda_A - \Lambda_B\|_F^2.
\tag{$\ast$}
\]
From the Gaussian upper bound,
\[
GW_2^2(A,B) \;\le\; GGW_2^2(A,B)
= 4\bigl(\mathrm{tr}(\Lambda_A)-\mathrm{tr}(\Lambda_B)\bigr)^2
 +8\|\Lambda_A-\Lambda_B\|_F^2.
\]
By Cauchy–Schwarz,
\[
\bigl(\mathrm{tr}(\Lambda_A)-\mathrm{tr}(\Lambda_B)\bigr)^2
= \Bigl(\sum_{i=1}^d (\lambda^A_i - \lambda^B_i)\Bigr)^2
\le d\sum_{i=1}^d (\lambda^A_i - \lambda^B_i)^2
= d\,\|\Lambda_A-\Lambda_B\|_F^2.
\]
Therefore,
\[
GGW_2^2(A,B)
\;\le\;
4d\,\|\Lambda_A-\Lambda_B\|_F^2
+ 8\,\|\Lambda_A-\Lambda_B\|_F^2
= 4(d+2)\,\|\Lambda_A-\Lambda_B\|_F^2,
\]
and hence
\[
GW_2^2(A,B) \;\le\; 4(d+2)\,\|\Lambda_A-\Lambda_B\|_F^2.
\tag{$\ast\ast$}
\]
From $(\ast)$ we have
\[
\|\Lambda_A-\Lambda_B\|_F^2
\;\le\;
d\,\beta^2\,D_G^2(A,B).
\]
Substituting into $(\ast\ast)$ gives
\[
GW_2^2(A,B)
\;\le\;
4(d+2)\,d\,\beta^2\,D_G^2(A,B).
\]
Rearranging,
\[
D_G^2(A,B)
\;\ge\;
\frac{GW_2^2(A,B)}{4\,d(d+2)\,\beta^2},
\]
and taking square roots yields
\[
D_G(A,B)
\;\ge\;
\frac{GW_2(A,B)}{2\,\beta\,\sqrt{d(d+2)}}.
\]
\end{proof}

\begin{corollary}[Identity of Indiscernibles for Gaussians]
Under the same setting as above, $D_G(A, B) = 0$ if and only if there exists an orthogonal matrix $R$ and translation $t$ such that $B$ is the distribution of $RX + t$ for $X \sim A$.
\end{corollary}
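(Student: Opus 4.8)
The plan is to reduce the statement to the closed form $D_G^2(A,B) = \tfrac{1}{d}\|\mathbf{a}-\mathbf{b}\|_2^2$ established in Theorem~\ref{thm:RISWIE_PCA}, where $\mathbf{a},\mathbf{b}$ list the square roots of the (decreasingly ordered) eigenvalues of $\Sigma_A,\Sigma_B$, and then to invoke elementary spectral theory for symmetric positive semidefinite matrices.

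For the ``if'' direction I would argue directly: if $B$ is the law of $RX+t$ with $X\sim A$ and $R\in O(d)$, then $\Sigma_B=R\Sigma_A R^\top$, so $\Sigma_A$ and $\Sigma_B$ are orthogonally similar and hence share the same multiset of eigenvalues; since both lists are written in decreasing order this gives $\lambda_i^A=\lambda_i^B$ for every $i$, so $\mathbf{a}=\mathbf{b}$ and $D_G(A,B)=0$. (This is also a special case of Theorem~\ref{thm:rigid_invariance}, but the closed form makes it a one-liner.)

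For the ``only if'' direction, suppose $D_G(A,B)=0$. Theorem~\ref{thm:RISWIE_PCA} forces $\|\mathbf{a}-\mathbf{b}\|_2=0$, hence $\lambda_i^A=\lambda_i^B=:\lambda_i$ for all $i$; set $\Lambda=\mathrm{diag}(\lambda_1,\dots,\lambda_d)$. Using the eigendecompositions $\Sigma_A=U_A\Lambda U_A^\top$ and $\Sigma_B=U_B\Lambda U_B^\top$ with $U_A,U_B\in O(d)$, I would take $R:=U_B U_A^\top\in O(d)$ and $t:=\omega_B-R\omega_A$; then $R\Sigma_A R^\top=U_B\Lambda U_B^\top=\Sigma_B$, and for $X\sim\mathcal{N}(\omega_A,\Sigma_A)$ we obtain $RX+t\sim\mathcal{N}(R\omega_A+t,\,R\Sigma_A R^\top)=\mathcal{N}(\omega_B,\Sigma_B)=B$, which is exactly the claim.

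I do not expect a genuine obstacle here: the entire content is carried by the closed form of Theorem~\ref{thm:RISWIE_PCA}, and the only step requiring a moment's care is the construction of $R$ in the reverse direction. Under the distinct-eigenvalue hypothesis, $U_A$ and $U_B$ are unique up to per-column signs, so $R=U_B U_A^\top$ is well defined up to an inconsequential signed-permutation ambiguity on the axes; if repeated eigenvalues were allowed, one would instead fix any orthonormal eigenbasis on each eigenspace, still producing a valid orthogonal $R$ conjugating $\Sigma_A$ to $\Sigma_B$.
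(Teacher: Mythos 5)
Your proof is correct and follows essentially the same route as the paper: both directions reduce, via the Gaussian closed form of Theorem~\ref{thm:RISWIE_PCA}, to equality of the (sorted) covariance eigenvalues, and then to orthogonal conjugacy of $\Sigma_A$ and $\Sigma_B$. Your explicit choice $R = U_B U_A^\top$, $t = \omega_B - R\omega_A$ is in fact a slightly cleaner way of justifying the step the paper states more loosely as ``$\Sigma_B = R\Sigma_A R^\top$'' after passing through the signed permutation.
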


\begin{proof}
From Theorem \ref{them:RISWIE_PCA}, \[
D_G^2(A,B)=\frac{1}{d}\sum_{j=1}^d \left(\sqrt{\lambda_j^A}-\sqrt{\lambda_j^B}\right)^2
\]
after labeling the eigenvalues in sorted order. Hence,
\[
D_G(A,B)=0
\]
if and only if the eigenvalues of $\Sigma_A$ and $\Sigma_B$ agree up to permutation, i.e., there exists a permutation $\pi$ such that
\[
\lambda_j^A=\lambda_{\pi(j)}^B, \qquad j=1,\dots,d.
\]

Assume, without loss of generality, that $A$ and $B$ are centered (otherwise, subtract their means). Write
\[
\Sigma_A = U_A \Lambda U_A^\top,
\qquad
\Sigma_B = U_B \Lambda U_B^\top,
\]
where $\Lambda$ is the common diagonal matrix of eigenvalues (after reordering). Then
\[
\Sigma_B
= U_B \Lambda U_B^\top
= U_B U_A^\top \Sigma_A U_A U_B^\top.
\]
Setting
\[
T := U_B U_A^\top,
\]
we see that $T$ is orthogonal and
\[
\Sigma_B = T \Sigma_A T^\top.
\]
Therefore, if $X \sim A$, then $TX$ is a centered Gaussian with covariance $\Sigma_B$, so $TX \sim B$. Thus, $B$ is the law of $TX$ for some orthogonal $T$.

Conversely, if $B$ is the distribution of $TX+t$ for some orthogonal $T$ and $X\sim A$, then
\[
\Sigma_B = T \Sigma_A T^\top.
\]
Orthogonal conjugation (a similarity transform) preserves the multiset of eigenvalues, so $\Sigma_A$ and $\Sigma_B$ have the same eigenvalues. By Theorem \ref{them:RISWIE_PCA}, this implies
\[
D_G(A,B)=0.
\]
\end{proof}

\begin{theorem}[Stability of RISWIE under Gaussian Covariance Perturbations]

    If $\Sigma' = \Sigma_X + E$ with $E = E^\top$ and all eigenvalues of $\Sigma_X, \Sigma'$ are $\geq \lambda_{\min} > 0$, then
    \[
    D_G(X, X') \leq \frac{\|E\|_2}{2\sqrt{\lambda_{\min}}}.
    \]

\end{theorem}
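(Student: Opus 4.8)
The plan is to reduce the claim to the Gaussian closed form of Theorem~\ref{thm:RISWIE_PCA} and then control how the square roots of eigenvalues move under a symmetric perturbation. Write the sorted eigenvalues of $\Sigma_X$ as $\lambda_1 \ge \cdots \ge \lambda_d \ge \lambda_{\min}$ and those of $\Sigma' = \Sigma_X + E$ as $\lambda_1' \ge \cdots \ge \lambda_d' \ge \lambda_{\min}$. By Theorem~\ref{thm:RISWIE_PCA} (the adjacent-transposition rearrangement argument in its proof shows the sorted matching is optimal even when some eigenvalues coincide, so the closed form applies verbatim here),
\[
D_G^2(X,X') = \frac1d \sum_{i=1}^d \bigl(\sqrt{\lambda_i} - \sqrt{\lambda_i'}\bigr)^2 .
\]

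First I would invoke Weyl's perturbation inequality for symmetric matrices: since $\Sigma' - \Sigma_X = E$ is symmetric, the correspondingly sorted eigenvalues satisfy $|\lambda_i - \lambda_i'| \le \|E\|_2$ for every $i$ (using that the spectral norm of a symmetric matrix equals its largest eigenvalue in absolute value). Next I would pass from eigenvalues to their square roots via the elementary identity $\sqrt{\lambda_i} - \sqrt{\lambda_i'} = (\lambda_i - \lambda_i')/(\sqrt{\lambda_i} + \sqrt{\lambda_i'})$, together with the lower bound $\sqrt{\lambda_i} + \sqrt{\lambda_i'} \ge 2\sqrt{\lambda_{\min}}$ coming from the assumed spectral lower bound. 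Combining these gives the uniform estimate
\[
\bigl|\sqrt{\lambda_i} - \sqrt{\lambda_i'}\bigr| \le \frac{\|E\|_2}{2\sqrt{\lambda_{\min}}}\qquad\text{for all } i = 1,\dots,d.
\]

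Finally I would substitute this bound into the closed form: each of the $d$ summands is at most $\|E\|_2^2/(4\lambda_{\min})$, hence $D_G^2(X,X') \le \|E\|_2^2/(4\lambda_{\min})$, and taking square roots yields the stated inequality. I do not expect a genuine obstacle; the only points needing a sentence of care are (a) justifying that the sorted-eigenvalue closed form of Theorem~\ref{thm:RISWIE_PCA} is valid in the possibly-degenerate case, and (b) citing the correct, index-wise form of Weyl's inequality for sorted eigenvalues rather than merely the top-eigenvalue perturbation bound.
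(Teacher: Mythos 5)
Your proposal is correct and follows essentially the same route as the paper's proof: Weyl's inequality for the sorted eigenvalues of $\Sigma_X$ and $\Sigma' = \Sigma_X + E$, a $\tfrac{1}{2\sqrt{\lambda_{\min}}}$-Lipschitz bound on the square root (you via the conjugate identity $\sqrt{a}-\sqrt{b}=(a-b)/(\sqrt{a}+\sqrt{b})$, the paper via the mean value theorem -- an immaterial difference), and substitution into the Gaussian closed form $D_G^2 = \tfrac1d\sum_i(\sqrt{\lambda_i}-\sqrt{\lambda_i'})^2$. Your extra remark about justifying the sorted-matching closed form when eigenvalues coincide is a fair point of care that the paper itself glosses over, but it does not change the argument.
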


\begin{proof}
By Weyl's theorem for symmetric matrices (as discussed by \citet{shamrai2025perturbationanalysissingularvalues}), for each $i = 1, \ldots, d$,
\[
\left| \lambda_i(\Sigma') - \lambda_i(\Sigma_X) \right| \leq \| \Sigma' - \Sigma_X \|_2 = \| E \|_2 \leq \eta,
\]
where we set $\eta := \|E\|_2$. 

Consider the function $f(x) = \sqrt{x}$ for $x \geq 0$. By the mean value theorem, for each $i$, there exists $\xi_i$ between $\lambda_i(\Sigma_X)$ and $\lambda_i(\Sigma')$ such that
\[
\left| \sqrt{\lambda_i(\Sigma')} - \sqrt{\lambda_i(\Sigma_X)} \right|
= f'(\xi_i) \cdot \left| \lambda_i(\Sigma') - \lambda_i(\Sigma_X) \right|.
\]
Since $f'(x) = \frac{1}{2\sqrt{x}}$ and all eigenvalues of $\Sigma_X$ and $\Sigma'$ are at least $\lambda_{\min}$, we have $\xi_i \geq \lambda_{\min}$, and $f'(\xi_i)$ is decreasing, so
\[
f'(\xi_i) = \frac{1}{2\sqrt{\xi_i}} \leq \frac{1}{2\sqrt{\lambda_{\min}}}.
\]

Therefore,
\[
\left| \sqrt{\lambda_i(\Sigma')} - \sqrt{\lambda_i(\Sigma_X)} \right| \leq \frac{1}{2\sqrt{\lambda_{\min}}} \cdot \eta.
\]

Let $\sigma_i := \sqrt{\lambda_i(\Sigma_X)}$, $\sigma_i' := \sqrt{\lambda_i(\Sigma')}$, and collect them as vectors $\sigma = (\sigma_1, \dots, \sigma_d)$, $\sigma' = (\sigma_1', \dots, \sigma_d')$. 

Then, 
\[
\| \sigma' - \sigma \|_2 \leq \sqrt{ \sum_{i=1}^d \left( \frac{\eta}{2\sqrt{\lambda_{\min}}} \right)^2 } = \frac{\eta}{2\sqrt{\lambda_{\min}}} \sqrt{d},
\]
so
\[
D_G(X, X') \leq \frac{1}{\sqrt{d}} \cdot \frac{\eta}{2\sqrt{\lambda_{\min}}} \sqrt{d} = \frac{\eta}{2\sqrt{\lambda_{\min}}}.
\]

More generally, if the lower bound for each eigenvalue is $\min(\lambda_i(\Sigma_X), \lambda_i(\Sigma'))$, then by the same reasoning,
\[
D_G(X, X') \leq \frac{\eta}{2\sqrt{d}}
\sqrt{ \sum_{i=1}^d \frac{1}{\min(\lambda_i(\Sigma_X), \lambda_i(\Sigma'))} }.
\]
\end{proof}

\begin{theorem}[Consistency of empirical RISWIE]\label{thm:consistency}
Let $\hat{\mu}_n, \hat{\nu}_n$ denote empirical measures of size $n$ drawn i.i.d. from $\mu, \nu \in \mathcal{P}_2(\mathbb{R}^d)$, respectively. 
For each $n$, let $\{\phi_{n,j}\}_{j=1}^k$ and $\{\psi_{n,j}\}_{j=1}^k$ be (possibly data-dependent) embedding functions used by RISWIE, and let $\{\phi_j\}_{j=1}^k$, $\{\psi_j\}_{j=1}^k$ be limiting embedding functions.

Assume that for each $j=1,\dots,k$:
(i) the pushforward measures converge almost surely,
\[
(\phi_{n,j})_\# \hat{\mu}_n \Rightarrow (\phi_j)_\# \mu
\quad\text{and}\quad
(\psi_{n,j})_\# \hat{\nu}_n \Rightarrow (\psi_j)_\# \nu,
\]
and (ii) their second moments converge almost surely.

Then
\[
D(\hat{\mu}_n, \hat{\nu}_n) \xrightarrow{\text{a.s.}} D(\mu, \nu) \quad \text{as } n \to \infty,
\]
where $D(\mu,\nu)$ is defined using the limiting embeddings $\{\phi_j\}, \{\psi_j\}$.
\end{theorem}

\begin{proof}
Fix \(R \in \mathcal{O}_k^\pm\) and \(j\in\{1,\dots,k\}\).
By the embedding-consistency assumption, the one-dimensional pushforwards
\((\phi_{n,j})_\# \hat{\mu}_n\) and \((\psi_{n,Rj})_\# \hat{\nu}_n\) converge weakly almost surely to
\((\phi_j)_\# \mu\) and \((\psi_{Rj})_\# \nu\), respectively, and their second moments converge almost surely.
Since \(W_2\) on \(\mathbb{R}\) is continuous under weak convergence plus convergence of second moments, it follows that
\[
W_2\!\left( (\phi_{n,j})_\# \hat{\mu}_n,\; (\psi_{n,Rj})_\# \hat{\nu}_n \right)
\xrightarrow{\text{a.s.}}
W_2\!\left( (\phi_j)_\# \mu,\; (\psi_{Rj})_\# \nu \right).
\]
Averaging over \(j=1,\dots,k\) preserves almost sure convergence. Finally, since \(\mathcal{O}_k^\pm\) is finite,
taking the minimum over \(R\in\mathcal{O}_k^\pm\) also preserves almost sure convergence. Therefore,
\[
D(\hat{\mu}_n, \hat{\nu}_n) \xrightarrow{\text{a.s.}} D(\mu, \nu),
\]
as claimed.
\end{proof}

\begin{remark}[Bias of the empirical RISWIE estimator]\label{thm:empirical_bias}
Let $\mu$ be Borel probability measure with finite second moments.  Then, 
$D(\mu, \mu) = 0$, but
\[
\mathbb{E}\big[D(\hat{\mu}_n, \hat{\mu}_n')\big] > 0,
\]
where $\hat{\mu}_n'$ is another independent sample of $\mu$.
\end{remark}

\begin{proof}

We have $D(\mu, \mu) = 0$, since projecting and optimally matching each direction trivially yields zero cost. However, the independent empirical marginals $\hat{\alpha}_j$ and $\hat{\alpha}_j'$ almost surely differ, and thus $W_2^2(\hat{\alpha}_j, \hat{\alpha}_j') > 0$ almost surely for each $j$. Therefore, averaging and minimizing still yields strictly positive expectation:
\[
\mathbb{E}\big[D(\hat{\mu}_n, \hat{\mu}_n')\big] > 0.
\]



\end{proof}

\end{document}